\DeclareMathOperator*{\argmax}{arg\,max}
\newtheorem{theorem}{Theorem}
\newtheorem{definition}{Definition}
\newtheorem{lemma}{Lemma}
\newtheorem{assumption}{Assumption}
\newtheorem{remark}{Remark}
\newtheorem{example}{Example}
\def\blue{\color{black}}
\def\BibTeX{{\rm B\kern-.05em{\sc i\kern-.025em b}\kern-.08em
    T\kern-.1667em\lower.7ex\hbox{E}\kern-.125emX}}
\let\hide\iffalse
\colorlet{blue}{blue}
\begin{document}
\title{AoI-based Scheduling of Correlated Sources \\ for Timely Inference\\
}

\IEEEoverridecommandlockouts
\author{Md Kamran Chowdhury Shisher,~\IEEEmembership{Member,~IEEE,}
        Vishrant Tripathi,~\IEEEmembership{Member,~IEEE,}\\ Mung Chiang,~\IEEEmembership{Fellow,~IEEE,} Christopher G. Brinton,~\IEEEmembership{Senior Member,~IEEE}
        \IEEEcompsocitemizethanks{\IEEEcompsocthanksitem An abridged version of this paper was presented at IEEE ICC 2025 \cite{TechICC}. 

        M.K.C. Shisher, V. Tripathi, M. Chiang, and C. Brinton are with the Elmore Family School of Electrical and Computer Engineering, Purdue University, West Lafayette, IN 47907, USA (e-mail: mshisher@purdue.edu, tripathv@purdue.edu, chiang@purdue.edu, cgb@purdue.edu).  
         
        M. K. C. Shisher and C. Brinton were supported in part by the Office of Naval Research (ONR) under grants N00014-23-C-1016 and N00014-22-1-2305, and by the National Science Foundation (NSF) under grant CPS-2313109.}}
\maketitle

\begin{abstract}
    We investigate a real-time remote inference system where multiple correlated sources transmit observations over a communication channel to a receiver. The receiver utilizes these observations to infer multiple time-varying targets. Due to limited communication resources, the delivered observations may not be fresh. To quantify data freshness, we employ the Age of Information (AoI) metric. To minimize the inference error, we aim to design a signal-agnostic scheduling policy that leverages AoI without requiring knowledge of the actual target values or the source observations. This scheduling problem is a restless multi-armed bandit (RMAB) problem with a non-separable penalty function. Unlike traditional RMABs, the correlation among sources introduces a unique challenge: the penalty function of each source depends on the AoI of other correlated sources, preventing the problem from decomposing into multiple independent Markov Decision Processes (MDPs), a key step in applying traditional RMAB solutions. To address this, we propose a novel approach that approximates the penalty function for each source and establishes an analytical bound on the approximation error. We then develop scheduling policies for two scenarios: (i) full knowledge of the penalty functions and (ii) no knowledge of the penalty functions. For the case of known penalty functions, we present an upper bound on the optimality gap that highlights the impact of the correlation parameter and the system size. For the case of unknown penalty functions and signal distributions, we develop an online learning approach that utilizes bandit feedback to learn an online Maximum Gain First policy. Simulation results demonstrate the effectiveness of our proposed policies in minimizing inference error and achieving scalability in the number of sources.
\end{abstract}

\begin{IEEEkeywords}
Age of Information, remote inference, correlated sources, scheduling.
\end{IEEEkeywords}

\section{Introduction}

Next-generation communications (Next-G) (e.g., 6G) are expected to support many intelligent applications such as environmental forecasting, surveillance, networked control of robot or UAV swarms, communication between connected vehicles, and massive sensing via the Internet of Things (IoTs) \cite{giordani2020toward}. These applications often require timely inference of dynamic targets (e.g., positions of moving objects, changes in environmental conditions).

In this paper, we investigate timely inference of multiple time-varying targets based on observations collected from remote sources (e.g., sensors, cameras, IoT devices, UAVs). These observations are transmitted to a receiver over a capacity-limited communication channel. Furthermore, the source observations can exhibit correlation in their dynamics. For example, in environmental monitoring, the data collected by geographically close sensors measuring the temperature or humidity would be correlated. In scenarios where vehicles communicate with other neighboring vehicles, data collected from them can be spatially correlated. Similarly, in target tracking with UAV-mounted cameras, observations from cameras with overlapping fields of view are correlated. Timely and effective scheduling of correlated sources in wireless communication networks is crucial for minimizing inference errors of dynamic targets and improving real-time performance.

Due to limited communication resources, observations delivered from remote sources may not be fresh. \emph{Age of Information} (AoI), introduced in \cite{song1990performance, kaul2012real}, provides a convenient measure of information freshness regarding the sources at the receiver. Specifically, consider packets sent from a source to a receiver: if $U(t)$ is the generation time of the most recently received packet by time $t$, then the AoI at time $t$ is the difference between $t$ and $U(t)$. Recent works on remote inference \cite{shisher2021age, shishertimely, shisher2023learning,  shisher2024AR} have shown that the inference errors for different tasks can be expressed as functions of AoI. Additionally, AoI can be readily tracked, making it a promising metric for determining how to prioritize resource allocation. Recent works \cite{tripathi2022optimizing,vishrantcorrelated} have also shown that AoI-based scheduling is sufficient to obtain near-optimal scheduling of correlated Gauss–Markov sources with linear time-invariant (LTI) system models. While this is promising, many real-world systems exhibit non-linear dynamics and complex correlation structures. Motivated by this, in this work, we pose the following research question:
\begin{quote}
\textbf{\textit{How can we develop AoI-based scheduling of correlated sources with arbitrary correlation structure to minimize inference errors for target processes?}}
\end{quote}

\subsection{Outline and Summary of Contributions}
In answering the above question, we make the following contributions:
\begin{itemize}[leftmargin=4mm]
    \item To minimize the discounted sum of inference errors for multiple targets, we formulate the problem of scheduling correlated sources over a capacity-limited channel. For the set of all causal and signal-agnostic scheduling policies in which (i) the scheduler makes its decision based on the current and the past information available and (ii) the scheduler does not have access to the realization of the actual processes, we show that the problem can be expressed as minimizing the discounted sum of AoI penalty functions (see Lemma \ref{lemma1} and \eqref{Multi-scheduling_problem1}-\eqref{Scheduling_constraint2}).   
    
    \item The scheduling problem is a restless multi-armed bandit (RMAB) problem with a non-separable penalty function. In contrast to the traditional RMAB framework \cite{whittle1988restless}, the penalty functions of all arms in this problem are intertwined. This is because the penalty function for each arm depends on the AoI values of all sources. This interconnectedness prevents the decomposition of the problem into independent Markov Decision Processes (MDPs) even with constraint relaxation, a common approach used for traditional RMABs \cite{whittle1988restless}. We address this problem by establishing an information-theoretic lower bound (see Lemma \ref{approx}) on inference error and use it to decompose the problem into multiple independent MDPs. Prior work \cite{vishrantcorrelated} also provided a lower bound of inference error under correlated settings. However, the bound in \cite{vishrantcorrelated} was limited to Gauss-Markov sources with an LTI system model. Our information-theoretic lower bound extends the bound of \cite{vishrantcorrelated} to a significantly broader class of systems, including non-linear and non-Gaussian sources. 
    
    \item After establishing the approximated function, we develop a Maximum Gain First (MGF) policy (see Algorithm \ref{alg:MGF}) when the scheduler has full knowledge of penalty functions and/or the distribution of targets and source observations. {\blue We present an upper bound on the optimality gap of our MGF policy (see Theorem \ref{theorem2}). The optimality gap highlights the impact of correlation parameter and the system size.}  

    \item For unknown penalty functions and arbitrary correlation settings, it is not possible to use the bound developed for the known penalty function. Towards this end, we first solve a Lagrangian relaxed problem with a new method of function approximation. Then, we provide an Online Threshold Policy for solving the relaxed problem in Algorithm \ref{alg:threshold} by using a bandit feedback structure. Theorem \ref{convergence} analyzes the behavior of Algorithm \ref{alg:threshold}. 
    
    \item By using the structure of the Online Threshold Policy (Algorithm \ref{alg:threshold}) and Gain index (Definition \ref{gainindex}), we provide a novel ``Online Maximum Gain First" (Online-MGF) policy in Algorithm \ref{alg:gain}. We conduct simulations (see Section \ref{simulation}) by using a correlation model developed in \cite{tripathi2022optimizing}. The simulation results show the effectiveness of our MGF and Online-MGF policies without knowing the exact correlation structure.
\end{itemize}

\subsection{Related Works}
Over the past decade, there has been a rapidly growing body of research on analyzing AoI for queuing systems \cite{kaul2012real,sun2017update, huang2015optimizing, bedewy2019minimizing, yates2015lazy, jiang2019status}, using AoI as a metric for scheduling policies in networks \cite{kadota2018optimizing, kadota2018scheduling}, for monitoring or controlling systems over networks \cite{klugel2019aoi, Tripathi2019}, and for optimizing remote estimation \cite{pan2023sampling,SunTIT2020,orneeTON2021, ornee2023whittle, OrneeMILCOM} and inference systems \cite{ShisherMobihoc, shishertimely, shisher2021age, shisher2023learning, shisher2024AR, ari2023goal, shisher2025computation}. For detailed surveys of AoI literature see \cite{yates2021age, sun2022age}.

One of the first works to analyze how AoI values of correlated source observations affect remote inference was \cite{shisher2021age}, in which an information-theoretic analysis was provided to understand the impact of AoI of correlated sources on inference error. However, the authors of \cite{shisher2021age} did not provide any scheduling policies to optimize inference performance under correlation. In \cite{tripathi2022optimizing}, the authors studied the scheduling of correlated sources based on AoI values. This work considers a probabilistic model of correlation between the sources, and then proposed an AoI-based scheduling policy for improving information freshness at the receiver. However, the probabilistic model does not reflect the impact of correlation on inference error or estimation of targets. In \cite{vishrantcorrelated}, the authors developed an AoI-based scheduling policy for minimizing estimation error of correlated Gauss-Markov sources with an LTI system model. Our current work is related to these prior works on AoI-based scheduling of correlated sources \cite{tripathi2022optimizing, vishrantcorrelated}. In particular, this paper generalizes the work of \cite{tripathi2022optimizing, vishrantcorrelated} to arbitrary systems, loss functions, and target/source processes. In addition, the prior works \cite{tripathi2022optimizing, vishrantcorrelated} consider that the distribution of the signal processes and the penalty functions are known to the scheduler. In this paper, we extend our scheduling policy to the setting, where the distribution of the signal processes and the penalty functions are not available to the scheduler.

The problem of scheduling multiple sources to optimize linear and non-linear functions of AoI can be formulated as a restless multi-armed bandit (RMAB) problem \cite{Tripathi2019, ShisherMobihoc, xiong2022index, zou2021minimizing, chen2021scheduling, chen2021uncertainty}.
While Whittle index policies \cite{Tripathi2019, ShisherMobihoc, whittle1988restless, shishertimely, chen2021uncertainty} and gain index policies \cite{OrneeMILCOM, chen2023index, shisher2023learning, shisher2025computation} have been shown to achieve good performance and asymptotic optimality in certain RMABs \cite{verloop2016asymptotically, gast2021lp}, they cannot be directly applied to our problem. This is because the correlation among sources introduces a non-separable penalty function, where the penalty for each arm depends on the AoI of other arms. This interdependence prevents the decomposition of the problem into multiple independent MDPs, a key step in applying traditional RMAB solutions. Our paper addresses this challenge by developing a novel approach to handle correlated sources with a non-separable penalty function.

Furthermore, our paper contributes to the design of online policies for RMAB problems. While existing works like \cite{avrachenkov2022whittle, wang2023optimistic, killian2021q} have explored online policies, they focus on scenarios with independent penalty functions and arms. A recent study \cite{raman2024global} investigates RMAB problems with non-separable, monotonic, and sub-modular global penalty function, in which the authors develop an approximate Whittle index-based policy. In contrast, our work considers arbitrary penalty functions and employs a gain index-based approach.  Moreover, we focus on a signal-agnostic scheduler and provide scheduling policies for both known and unknown penalty functions, unlike \cite{raman2024global}, which assumes full system information. 

\section{System Model}

Consider $M$ sources communicating over a shared wireless channel to a receiver (Fig.~\ref{fig:scheduling}). We assume discrete slotted time and at every time slot $t$, each source $m$ observes a time-varying signal $X_{m,t} \in \mathcal{X}_m$, where $\mathcal{X}_m$ represents the finite set of possible observations from source $m$. 
A scheduler progressively schedules source observations to the receiver. At each time slot $t$, the receiver uses the received observations to infer $M$ targets $(Y_{1, t}, Y_{2,t}, \ldots, Y_{M,t})$ of interest. Each target $Y_{m,t}$ is drawn from a finite set $\mathcal Y_m$ and can be directly inferred from the corresponding source observation $X_{m, t}$. {\blue For example, $Y_{m,t}$ can be a function of $X_{m, t}$, such as $Y_{m, t}=X_{m, t}$.} Moreover, the target $Y_{m,t}$ can be correlated with the observation $X_{n, t}$ of any other source $n\neq m$. The results of this paper are derived without assuming a specific correlation model. This ensures the general applicability of our results. An information theoretic interpretation on the impact of the correlation is provided in Sec. \ref{Interpretation}.  

\subsection{Communication Model}
Due to interference and bandwidth limitations, we assume that at most $N$ out of $M$ sources can be scheduled to transmit their observations at any time slot $t$, where $0<N<M$. If source $m$ is scheduled at time $t$, it transmits the current observation $X_{m, t}$ to the receiver. For simplicity, we assume reliable channels, i.e., observations sent at time slot $t$ are delivered error-free at time slot $t+1$. Because of communication constraints, the receiver may not have fresh observations from all sources. Let $X_{m, t-\Delta_m(t)}$ be the most recently delivered signal observation from source $m$ which was generated $\Delta_m(t)$ time-slots ago. We call $\Delta_{m}(t)$ the age of information (AoI) \cite{kaul2012real, shishertimely} of source $m$. Let $U_{m}(t)$ be the generation time of the most recent delivered observation from source $m$. Then, the AoI can be formally defined as:
\begin{align}\label{AoI}
    \Delta_{m}(t)=t-U_{m}(t).
\end{align}
Let $\pi_{m}(t) \in \{0, 1\}$ be the scheduling decision of source $m$. At time slot $t$, if $\pi_{m}(t) = 1$, the $m$-th source is scheduled to transmit its observation to the receiver; otherwise, if $\pi_{m}(t) = 0$, this transmission does not occur. If $\pi_m(t-1)=0$, AoI $\Delta_m(t)=\Delta_m(t-1)+1$ grows by $1$; otherwise, if $\pi_m(t-1)=1$, AoI drops to $\Delta_m(t)=1$.

\subsection{Inference Model}
The receiver employs $M$ predictors. The $m$-th predictor $\phi_m$ uses the most recent observations from all sources along with their corresponding AoI values, to produce an inference result $a_{m, t} \in \mathcal A_m$ for target $Y_{m, t}$. Specifically, given the most recently delivered source observations and their AoI values
$$(X_{m, t-\Delta_m(t)}, \Delta_{m}(t))_{m=1}^M=(x_{m}, \delta_{m})_{m=1}^M,$$ the inference result $$a_{m,t}=\phi_m((x_{m}, \delta_{m})_{m=1}^M)\in \mathcal A_m$$ minimizes the expected loss function $\mathbb E[L(Y_{m, t}, a_m)]$ over all possible inference results $a_m \in \mathcal A_m$.
$L(y_m, a_m)$ is the loss incurred when the actual target is $Y_{m,t}=y_m$ and the predicted output is $a_{m,t}=a_m$. The loss function $L(
\cdot, \cdot)$ and the output space $\mathcal A_m$ can be designed according to the goal of the system. For example, $\mathcal A_m=\mathcal Y_m$ and quadratic loss $\|y-\hat y\|^2$ can be used for a regression task. In maximum likelihood estimation, we can use logarithmic loss function and the output space $\mathcal A_m$ can be $\mathcal P_{\mathcal Y_{m}}$, which is the set of all probability distributions on $\mathcal Y_m$. At time $t$, the expected inference error for target $Y_{m, t}$ is given by
$\mathbb E\left[L(Y_{m, t}, a_{m,t})\right].$

\begin{figure}[t]
\centering
\includegraphics[width=0.40\textwidth]{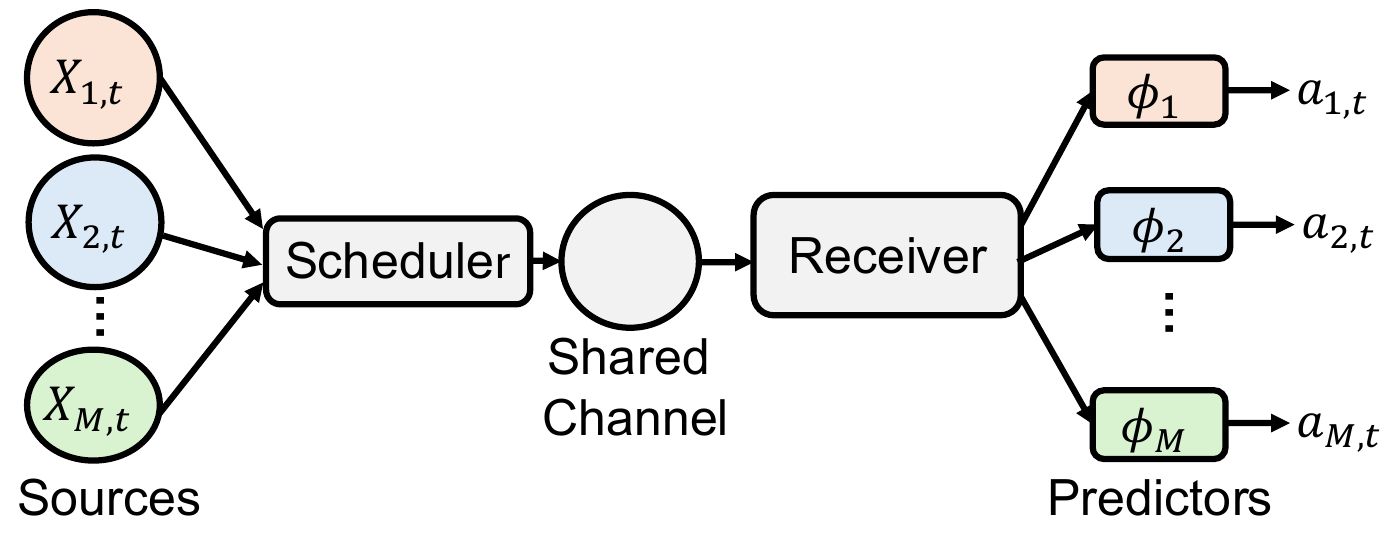}
\caption{\small A remote inference system with $M$ correlated sources, scheduler, shared channel, receiver, and $M$ predictors. 
\label{fig:scheduling}
}
\vspace{-3mm}
\end{figure}

\section{Problem Formulation}
We focus on a class of \emph{signal-agnostic} scheduling policies, where scheduling decisions are made without knowledge of the observed process's current signal values, i.e., at time $t$ the centralized scheduler does not have access to $\{(Y_{m, t}, X_{m, t}), m=1, \ldots, M, t=0, 1, \ldots\}.$ Moreover, we make the following assumption on the observed and the target processes:

\begin{assumption}\label{stationary}
    The process $$\{Y_{m,t}, X_{1,t}, X_{2, t}, \ldots, X_{M, t}, t\in \mathbb Z\}$$ is jointly stationary, i.e., the joint distributions of $$(Y_{m, t}, X_{1,t_1}, X_{2, t_2}, \ldots, X_{M, t_M})$$ and $$(Y_{m, t+\tau}, X_{1,t_1+\tau}, X_{2, t_2+\tau}, \ldots, X_{M, t_M+\tau})$$ are same for all $\tau, t, t_1, t_2, \ldots, t_M$. Moreover, the assumption holds for all target process $Y_{m, t}$ with $m=1,2, \ldots, M$.
\end{assumption}

Assumption \ref{stationary} implies that the dependence of the inference target $Y_{m,t}$ on the signals of all sources remains stationary over time. Further, the correlation structure among the source observations also remains stationary over time. This allows us to show that inference error is a time-invariant function of AoI, as we will see in Lemma \ref{lemma1}. It is practical to approximate time-varying functions as time-invariant functions in the scheduler design. Moreover, the scheduling policy developed for time-invariant AoI functions serves as a valuable foundation for studying time-varying AoI functions \cite{tripathi2021online}.

Now, we are ready to formulate our problem. We denote the scheduling policy as
$$\pi = (\pi_{m}(0), \pi_{m}(1), \ldots)_{m=1, 2, \ldots, M}.$$
We let $\Pi$ denote the set of all signal-agnostic and causal scheduling policies $\pi$ that satisfy (i) the scheduler makes its decision at every time $t$ based on the current and the past information available to the scheduler and (ii) the scheduler does not have access to the realization of the process $\{(Y_{m, t}, X_{m, t}), m=1, \ldots, M, t=0, 1, \ldots\}.$

Our goal is to find a policy $\pi \in \Pi$ that minimizes the discounted sum of inference errors: 
\begin{align}\label{Multi-scheduling_problem}
\mathrm{V}_{opt} =&\inf_{\pi \in \Pi}\mathbb{E}_{\pi} \left[\sum_{t=0}^{\infty}\gamma^t \sum_{m=1}^ML(Y_{m, t}, a_{m,t})\right], \\\label{Scheduling_constraint1}
&~\mathrm{s.t.} \sum_{m=1}^{M} \pi_{m}(t)=N, t=0, 1,\ldots, 
\end{align}
where $0 < \gamma <1$ is a discount factor, $L(Y_{m, t}, a_{m,t})$ is the inference error for the $m$-th target at time $t$, and $\mathrm{V}_{opt}$ is the minimum average inference error. {\blue For the simplicity of theoretical analysis, we consider that the loss function is normalized to $[0, 1]$. However, our results can easily be generalized to any bounded loss function.}

{\blue We also discuss the time-averaged version of the problem in Appendix \ref{avg}.}


\section{AoI-based Problem Reformulation} 

In this section, we first use an information-theoretic approach and AoI as tools to interpret how the correlation among the $m$-th target and the observation from $n$-th source affect the inference error for the $m$-th target. To that end, we use the concept of generalized conditional entropy \cite{dawid1998, farnia2016minimax} or specifically, the $L$-conditional entropy \cite{shishertimely}.

\subsection{An Information-theoretic Interpretation}\label{Interpretation}

For a random variable $Y$, the $L$-entropy is given by 
\begin{align} \label{gen_def_L_entropy}
H_L (Y) = \min _{a \in \mathcal{A}} \mathbb{E}_{Y \sim P_Y} [L (Y, a)].
\end{align}
The $L$-conditional entropy of $Y$ given $X = x$ is  \cite{dawid1998, farnia2016minimax, shishertimely}
\begin{align} \label{L_cond_en}
H_L (Y | X =x)
=\min _{a \in \mathcal{A}} \mathbb{E}_{Y \sim P_{Y|X=x}} [L (Y, a)]
\end{align}
and the $L$-conditional entropy of $Y$ given $X$ is 
\begin{align} 
\label{L_cond_en}
H_L (Y | X)
=\mathbb{E}_{X \sim P_{X}}[H_L (Y | X=x)].
\end{align} 
Moreover, an $L$-mutual information among two random variables $Y$ and $X$ is defined as \cite{dawid1998, farnia2016minimax, shishertimely}
\begin{align}
    I_L(Y;X)=H_L(Y)-H_L(Y|X).
\end{align}
The $L$-mutual information $I_L(Y;X)$ quantifies the reduction of expected loss in predicting $Y$ by observing $X$. The $L$-conditional mutual information among two random variables $Y$ and $X$ given $Z$ is defined as \cite{dawid1998, farnia2016minimax, shishertimely}
\begin{align}\label{CMI}
    I_L(Y;X|Z)=H_L(Y|Z)-H_L(Y|X, Z).
\end{align}

If AoI of sources are $(\Delta_1(t), \Delta_2(t), \ldots, \Delta_M(t))=(\delta_1, \delta_2, \ldots, \delta_M)$, then we can measure the reduction of expected inference error for target $Y_{m, t}$ given the freshest observations possible $X_{n, t-1}$ from source $n$ by using the definition of $L$-conditional mutual information \eqref{CMI}:
\begin{align}
    &I_L(Y_{m, t}; X_{n, t-1}|X_{1,t-\delta_1}, X_{2,t-\delta_2}, \ldots, X_{M,t-\delta_M})\nonumber\\
    &=H_L(Y_{m, t}|X_{1,t-\delta_1}, X_{2,t-\delta_2}, \ldots, X_{M,t-\delta_M})\nonumber\\
    &-H_L(Y_{m, t}|X_{1,t-\delta_1}, X_{2,t-\delta_2}, \ldots, X_{M,t-\delta_M},X_{n, t-1}).
\end{align}

Now, by using the concept of the $L$-conditional entropy, we equivalently express the multi-source scheduling problem in \eqref{Multi-scheduling_problem}-\eqref{Scheduling_constraint1} as the minimization of a penalty function of the AoI values $(\Delta_1(t), \Delta_2(t), \ldots, \Delta_M(t))$. 

Lemma \ref{lemma1} is first proved in \cite{shisher2021age}. We restate the result for the completeness of the paper.

\begin{lemma}\label{lemma1}
The inference error for $Y_{m, t}$ can be expressed as
\begin{align}
    \!\!\!\!\!\mathbb{E}_{\pi}\left[L(Y_{m, t}, a_{m,t})\right]\!\!=\!\!H_L\bigg(Y_{m, t}|(X_{n, t-\Delta_{n}(t)}, \Delta_{n}(t))_{n=1}^M\bigg)\!,
\end{align}
where $\Delta_{n}(t)$ is the AoI of source $n$ under the scheduling policy $\pi$. Moreover, if Assumption \ref{stationary} holds, then the $L$-conditional entropy
$$H_L\bigg(Y_{m, t}|(X_{n, t-\Delta_{n}(t)}, \Delta_{n}(t))_{n=1}^M\bigg)$$
is a function of AoI values $(\Delta_1(t), \Delta_2(t), \ldots, \Delta_M(t))$.
\end{lemma}
Lemma \ref{lemma1} implies that the inference error $\mathbb{E}_{\pi}\left[L(Y_{m, t}, a_{m,t})\right]$ can be represented as an L-conditional entropy of $Y_{m, t}$ given most recently delivered observations from all sources and their AoI values. Under Assumption \ref{stationary}, the L-conditional entropy is a function of AoI values $(\Delta_1(t), \Delta_2(t), \ldots, \Delta_M(t))$. For the simplicity of presentation, we represent the function as
\begin{align}\label{function}
&g_m(\Delta_1(t), \Delta_2(t), \ldots, \Delta_M(t)) \nonumber\\  
&=H_L\bigg(Y_{m, t}|(X_{n, t-\Delta_{n}(t)}, \Delta_{n}(t))_{n=1}^M\bigg).
\end{align}

By using Lemma \ref{lemma1} and \eqref{function}, we can express the problem \eqref{Multi-scheduling_problem}-\eqref{Scheduling_constraint1} as a minimization of the discounted sum of the AoI penalty functions.
\begin{align}\label{Multi-scheduling_problem1}
\mathrm{V_{opt}} =&\inf_{\pi \in \Pi}\mathbb{E}_{\pi} \left[\sum_{t=0}^{\infty}\gamma^t \sum_{m=1}^M g_m(\Delta_1(t), \Delta_2(t), \ldots, \Delta_M(t))\right], \\\label{Scheduling_constraint2}
&~\mathrm{s.t.} \sum_{m=1}^{M} \pi_{m}(t)=N, t=0, 1,\ldots, 
\end{align}

{However, as the number of sources $M$ and the number of channels $N$ increases, problem \eqref{Multi-scheduling_problem1}-\eqref{Scheduling_constraint2} becomes intractable. Specifically, problem \eqref{Multi-scheduling_problem1}-\eqref{Scheduling_constraint2} can be modeled as a Restless Multi-armed Bandit (RMAB) problem with a global penalty function:
$$\sum_{m=1}^M g_m\bigg(\Delta_1(t), \Delta_2(t), \ldots, \Delta_M(t)\bigg),$$
where AoI $\Delta_m(t)$ serves as the state of the $m$-th bandit arm. This problem is classified as ``restless" because even when a source $m$ is not scheduled for transmission, its AoI $\Delta_m(t)$ continues to evolve, incurring a penalty

$$g_m\bigg(\Delta_1(t), \Delta_2(t), \ldots, \Delta_M(t)\bigg).$$

In contrast to the traditional RMAB framework \cite{whittle1988restless}, the penalty function of each arm in this problem are intertwined. This is because the penalty function $g_m(\cdots)$ for each arm depends on the AoI values of all sources. This interconnectedness prevents the decomposition of the problem into independent Markov Decision Processes (MDPs), a common approach used for traditional RMABs \cite{whittle1988restless}, even with constraint relaxation. This inherent complexity distinguishes our problem from traditional RMAB problems and poses a greater analytical challenge.}

\section{Design of Scheduling Policy}
In this section, we will explore how to find a low complexity and close to optimal scheduling policy for \eqref{Multi-scheduling_problem1}-\eqref{Scheduling_constraint2}. We also consider that the scheduler knows the penalty function $g_m(\cdots)$ for any source $m$. In Section \ref{SchedulingUnknown}, we will design scheduling policy with unknown penalty function.

\subsection{Function Approximation $f_m(\delta_m)$}
{Like any MDP, our problem \eqref{Multi-scheduling_problem1}-\eqref{Scheduling_constraint2} can be solved by using dynamic programming method \cite{bertsekasdynamic1}. However, computing the optimal policy using dynamic programming becomes progressively harder in terms of space and time complexity for larger values of $M$, as the state-space and the action space to be considered grows exponentially with $M$. Whittle index policy \cite{whittle1988restless, shishertimely, Tripathi2019} and gain index policy \cite{shisher2023learning, OrneeMILCOM, gast2021lp} are low-complexity policies that have good performance for RMAB problems. For our problem, we can not directly obtain a scheduling policy like Whittle index policy and gain index policy. This is because the penalty function $g_m(\cdot, \cdot, \cdot)$ depends on the AoI of the other sources. To this end, in this section, we approximate the the penalty function $g_m(\cdot, \cdot, \cdot)$ by a function $f_m(\Delta_m(t))$ which only depends on the AoI of the source $m$. Then, we provide an index-based scheduling policy. 

We can define a possible option for $f_m(\delta_m)$ as follows:
\begin{align}\label{option1}
f_{m}(\delta_m)= H_L(Y_{m,t}|X_{m, t-\delta_m},Z_{-m, t-1})
\end{align}
where $Y_{m,t}$ is the $m$-th target at time $t$, $X_{m, t-\delta_m}$ is the observation generated from source $m$ at time $t - \delta_m$, $Z_{-m, t-1}=[X_{n, t-1}: \forall n\neq m]$ is a vector containing the signal observations generated at time $t-1$ from all sources except source $m$. This definition of $f_m(\delta_m)$ captures the expected inference error of the current target $Y_{m,t}$ given observation $X_{m, t-\delta_m}$ of source $m$ generated $\delta_m$ time slots ago and the recent observations $Z_{-m, t-1}$ from other sources generated $1$ time slot ago.

Since we have knowledge of the penalty function $g_m(\delta_1, \delta_2, \ldots, \delta_M)$, we can evaluate the function $f_{m}(\delta_m)$, by setting $\delta_n=1$ for all $n \neq m$. This effectively evaluates the function under the scenario where all other sources have an AoI of $1$ except source $m$. In Appendix \ref{approximatefunction}, we discuss how to compute the function $f_m(\delta)$ without using $g_m(\cdot, \cdot,\cdot)$.}

\begin{lemma}\label{approx}
   If the sequence $(Y_{m, t}, X_{n, t-1}, X_{n, t-k})$ forms a Markov chain $Y_{m, t} \leftrightarrow X_{n, t-1} \leftrightarrow X_{n, t-k}$ for all $k\geq 1$, then the following assertions are true.

   (a) We have 
   \begin{align}
         g_m(\delta_1, \delta_2, \ldots, \delta_M) \geq f_m(\delta_m).
    \end{align}
    
   (b) We have 
    \begin{align}\label{lowerbound}
        \!\!\!g_m(\delta_1, \delta_2, \ldots, \delta_M)\!\!=\!f_m(\delta_m)+O(\max_{n\neq m}\epsilon_{m,n}^2).
    \end{align}
     where the parameter $\epsilon_{m,n}$ is determined as 
    \begin{align}\label{condition1}
     &\epsilon_{m,n}\nonumber\\
     &= \sqrt{\max_{\delta_1, \delta_2, \ldots, \delta_M} I_L(Y_{m, t}; X_{n, t-1}|X_{1, t-\delta_1}, \ldots, X_{M, t-\delta_M})}, 
    \end{align}
    and $L$-conditional mutual information $I_L( )$ is defined in \eqref{CMI}.
       \end{lemma}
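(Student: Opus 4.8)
The plan is to derive both assertions from the information-theoretic interpretation of the $L$-conditional entropy $H_L$ and the chain-rule-type decomposition of $L$-conditional mutual information $I_L$. Recall that $f_m(\delta_m) = H_L(Y_{m,t} \mid X_{m,t-\delta_m}, Z_{-m,t-1})$ conditions on the most recent ($1$-slot-old) observations of all other sources, whereas $g_m(\delta_1,\ldots,\delta_M)$ corresponds (by the construction in the preceding section, and by Lemma~\ref{lemma1}) to $H_L$ of $Y_{m,t}$ conditioned on $X_{n,t-\delta_n}$ for all $n$. So both quantities are conditional $L$-entropies of the same target, differing only in the side information they condition on, and the gap between them is exactly an $L$-conditional mutual information term.

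For part (a), I would use the ``conditioning reduces $L$-entropy'' property: adding more observations to the conditioning set cannot increase $H_L$. The key observation is that the Markov chain $Y_{m,t} \leftrightarrow X_{n,t-1} \leftrightarrow X_{n,t-k}$ for every $k \ge 1$ means that, for the purpose of predicting $Y_{m,t}$, the freshest observation $X_{n,t-1}$ of source $n$ is a sufficient statistic dominating any older observation $X_{n,t-\delta_n}$ with $\delta_n \ge 1$. Hence the information carried by $(X_{n,t-\delta_n} : n \ne m)$ about $Y_{m,t}$ (given $X_{m,t-\delta_m}$) is no more than that carried by $(X_{n,t-1} : n\ne m) = Z_{-m,t-1}$. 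Formally I would write $g_m = H_L(Y_{m,t}\mid X_{m,t-\delta_m}, X_{n,t-\delta_n}\!:n\ne m) \ge H_L(Y_{m,t}\mid X_{m,t-\delta_m}, Z_{-m,t-1}, X_{n,t-\delta_n}\!:n\ne m) = H_L(Y_{m,t}\mid X_{m,t-\delta_m}, Z_{-m,t-1}) = f_m(\delta_m)$, where the last equality uses the Markov chain to discard the stale observations once the fresh ones are present. This needs the analogue of the data-processing / Markov property for $L$-entropy, which the paper presumably established when it introduced $I_L$ and $H_L$ (equation~\eqref{CMI} and surrounding definitions).

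For part (b), I would bound the gap $g_m(\delta_1,\ldots,\delta_M) - f_m(\delta_m)$ from above. Writing the difference as an $L$-conditional mutual information, $f_m(\delta_m) - g_m = I_L(Y_{m,t}; Z_{-m,t-1} \mid X_{m,t-\delta_m}, X_{n,t-\delta_n}\!:n\ne m)$ up to the Markov-chain simplification — wait, more carefully: the difference $g_m - f_m$ equals $H_L(Y_{m,t}\mid X_{m,t-\delta_m}, \text{stale}) - H_L(Y_{m,t}\mid X_{m,t-\delta_m}, \text{stale}, Z_{-m,t-1})$, which by definition is $I_L(Y_{m,t}; Z_{-m,t-1}\mid X_{m,t-\delta_m}, X_{n,t-\delta_n}\!:n\ne m)$, and this is nonnegative, consistent with (a). Then I would apply the chain rule for $I_L$ to split $Z_{-m,t-1} = (X_{n,t-1}:n\ne m)$ into its $M-1$ coordinates: $I_L(Y_{m,t}; Z_{-m,t-1}\mid \cdots) = \sum_{n \ne m} I_L(Y_{m,t}; X_{n,t-1}\mid X_{m,t-\delta_m}, \{X_{n',t-\delta_{n'}}\}, \{X_{n'',t-1}: n''<n\})$. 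Each summand is of the form $I_L(Y_{m,t}; X_{n,t-1}\mid \text{(some conditioning set of source observations)})$, and by the definition of $\epsilon_{m,n}$ in \eqref{condition1} as the maximum of exactly such conditional mutual informations over all AoI configurations, each term is at most $\epsilon_{m,n}^2$. Summing, $g_m - f_m \le \sum_{n\ne m}\epsilon_{m,n}^2 \le (M-1)\max_{n\ne m}\epsilon_{m,n}^2 = O(\max_{n\ne m}\epsilon_{m,n}^2)$, which gives \eqref{lowerbound}.

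The main obstacle I anticipate is making the chain-rule step for $I_L$ fully rigorous: ordinary Shannon mutual information enjoys an exact chain rule, but the ``$L$-mutual information'' here is defined via a loss function $L$ (a Bregman-type or generalized-entropy construction), and the chain rule for such quantities is typically only an inequality, or holds under extra conditions on $L$. I would need to confirm that the paper's earlier development of $I_L$ and $H_L$ supplies the needed sub-additivity / chain-rule inequality (it is enough to have $I_L(Y;A,B\mid C) \le I_L(Y;A\mid C) + I_L(Y;B\mid A,C)$ or the sum-over-coordinates bound directly), and cite it; if only a weaker form is available I would instead bound $I_L(Y_{m,t};Z_{-m,t-1}\mid\cdots)$ by a union-bound-style argument over the $M-1$ sources using the definition of $\epsilon_{m,n}$ directly. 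A secondary, more routine point is verifying that the Markov-chain hypothesis is exactly what licenses both the ``discard stale observations'' equality in (a) and the reduction of each conditional mutual information term to the form appearing in \eqref{condition1}; this should follow from the same Markov property applied coordinatewise.
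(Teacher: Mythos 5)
Your proposal matches the paper's proof in essence: part (a) is the same conditioning-reduces-$H_L$ plus Markov-sufficiency argument (the paper cites a lemma of Dawid for both steps), and part (b) is the same telescoping decomposition of $g_m(\delta_1,\ldots,\delta_M)-f_m(\delta_m)$ into per-source terms $I_L(Y_{m,t};X_{n,t-1}\mid \cdot)$, each bounded by $\epsilon_{m,n}^2$ via \eqref{condition1}, which the paper carries out explicitly by adding and subtracting $H_L$ terms one source at a time. Your worry about the chain rule for $I_L$ is moot, since $I_L$ is defined as a difference of $H_L$ terms the decomposition telescopes exactly; the only bookkeeping needed is dropping each stale observation from the conditioning set once its fresh counterpart is present (so that each summand matches the form in \eqref{condition1}), which is precisely the Markov-chain step the paper performs at each stage.
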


\begin{proof}
    See Appendix \ref{papprox}.
\end{proof}

By the definition of Markov chain $Y_{m, t} \leftrightarrow X_{n, t-1} \leftrightarrow X_{n, t-k}$, the target $Y_{m, t}$ does not depend on the older observation $X_{n, t-k}$ given a new observation $X_{n, t-1}$ from source $n$. Lemma \ref{approx}(a) implies that If the Markov chain $Y_{m, t} \leftrightarrow X_{n, t-1} \leftrightarrow X_{n, t-k}$ holds, then $f_m(\delta_m)$ is a lower bound of $g_m(\delta_1, \delta_2, \ldots, \delta_M)$. 

In \eqref{condition1}, we measure the maximum reduction of expected inference error for target $m$ after adding a new observation from source $n$ by the parameter $\epsilon_{m, n}$. If the parameter $\epsilon_{m, n}$ is close to zero, the impact of correlation between target $Y_{m,n}$ and a new observation from source $n$ on inference error of $Y_{m,n}$ is nearly negligible. We say if $\epsilon_{m, n}$ is close to zero, correlation is low. 

Lemma \eqref{approx}(b) implies that if $\epsilon_{m,n}$ tends to zero for all $n\neq m$, then the approximation error goes to zero, i.e., the function $f(\delta_m)$ is a good approximation of inference error function $g(\delta_1, \delta_2, \ldots, \delta_M)$ under low correlation.



\begin{remark}
    Our information-theoretic lower bound, $f_m(\delta_m)$, extends the bound in \cite{vishrantcorrelated} to a significantly broader class of systems, including non-linear and non-Gaussian dynamics. As shown in Appendix \ref{approximatefunction}, this bound coincides with the bound in \cite[Theorem 2]{vishrantcorrelated} when specialized to LTI systems with Gaussian processes.
\end{remark}

\subsection{Scheduling Policy}
{\blue To design a scheduling policy, we first divide entire duration into episodes, where each episode $k$ contains $T$ time slots. Here, we approximate infinite time with finite time $T$. For discounted penalty problem, the approximation error in considering finite time $T$ is negligible if $T$ is a very large number. Next, we use approximated penalty function. Thus, we reformulated the problem \eqref{Multi-scheduling_problem1}-\eqref{Scheduling_constraint2} as follows: 
\begin{align}\label{Multi-scheduling_problem2}
\mathrm{V}_{f, opt}(T) =&\inf_{\pi \in \Pi}\mathbb{E}_{\pi} \left[\sum_{t=0}^{T-1}\gamma^t \sum_{m=1}^M f_m(\Delta_m(t))\right], \\\label{Scheduling_constraint3}
&~\mathrm{s.t.} \sum_{m=1}^{M} \pi_{m}(t)=N, t=0, 1,\ldots, 
\end{align}
where $T$ is the truncated time horizon, $\mathrm{V}_{f, opt}(T)$ is the optimal objective value of the approximated problem \eqref{Multi-scheduling_problem2}-\eqref{Scheduling_constraint3} and $\pi_{f, opt}$ be an optimal policy to \eqref{Multi-scheduling_problem2}-\eqref{Scheduling_constraint3}.}


\begin{theorem}\label{approximationBound}
{\blue We have 
\begin{align}\label{theo1}
    |\mathrm{V}_{{\pi_{f, opt}}}(T)-\mathrm{V}_{opt}(T)|\leq \frac{2(1-\gamma^T)}{1-\gamma}\sum_{m=1}^MO\bigg(\max_{n\neq m}\epsilon_{m, n}^2\bigg),
\end{align}
where $\mathrm{V}_{{\pi_{f, opt}}}(T)$ is the optimal objective value of \eqref{Multi-scheduling_problem2}-\eqref{Scheduling_constraint3} and $\mathrm{V}_{opt}(T)$ is the optimal objective value of the problem \eqref{Multi-scheduling_problem1}-\eqref{Scheduling_constraint2} with finite horizon $T$.}
\end{theorem}
\begin{proof}
    See Appendix \ref{papproximationBound}.
\end{proof}
Theorem \ref{approximationBound} provides an approximation bound. Theorem \ref{approximationBound} implies that if the correlation among the sources is low, i.e., $\epsilon_{m, n}$ is close to $0$ for all $n\neq m$, then the optimal policy $\pi_{f, opt}$ to the approximated problem \eqref{Multi-scheduling_problem2}-\eqref{Scheduling_constraint3} will yield close to optimal performance for the main problem \eqref{Multi-scheduling_problem1}-\eqref{Scheduling_constraint2}.

\subsubsection{Lagrangian Relaxation}
{After establishing the approximation bound in Theorem \ref{approximationBound}, our goal is to find scheduling policy for the approximated problem \eqref{Multi-scheduling_problem2}-\eqref{Scheduling_constraint3}. Due to constraint \eqref{Scheduling_constraint3}, finding an optimal policy for the approximated problem \eqref{Multi-scheduling_problem2}-\eqref{Scheduling_constraint3} remains computationally intractable. This problem has been proven to be PSPACE-hard \cite{papadimitriou1994complexity}. We relax the constraint and get the relaxed problem:
{\blue \begin{align}\label{relaxedproblem}
&\mathrm{V}_{f, opt}(\boldsymbol{\lambda}, T)\nonumber\\
&=\inf_{\pi \in \Pi}\mathbb{E}_{\pi} \left[\sum_{t=0}^{T-1}\gamma^t \bigg(\sum_{m=1}^M f_m(\Delta_m(t))+\lambda_t (\pi_m(t)-N)\bigg)\right],
\end{align}
where $\lambda_t \in \mathbb R$ is the Lagrangian multiplier, $\boldsymbol{\lambda}=(\lambda_1, \lambda_2, \ldots, \lambda_T)$, and $\mathrm{V}_{f, opt}(\boldsymbol{\lambda}, T)$ is the optimal objective value of the relaxed problem \eqref{relaxedproblem}.} 

The dual problem to \eqref{relaxedproblem} is given by
\begin{align}\label{dualP}
    \max_{\boldsymbol{\lambda}\in \mathbb R^T} \mathrm{V}_{f, opt, \lambda},
\end{align}
where $\boldsymbol{\lambda}^*$ denotes the optimal Lagrange value to the problem \eqref{dualP}.
} 

\subsubsection{Gain Index and Threshold Policy}
The problem \eqref{relaxedproblem} can be decomposed into $M$ sub-problems, where each sub-problem $m$ is given by

\begin{align}\label{sub-relaxedproblem}
\inf_{\pi_m \in \Pi_m}\mathbb{E}_{\pi} \left[\sum_{t=0}^{\infty}\gamma^t \bigg(f_m(\Delta_m(t))+\lambda_t \pi_m(t)\bigg)\right].
\end{align}
{\blue The optimal value function for each source $m$ at time $t$ is given by
\begin{align}\label{decomposedValue}
    &J_{m,\boldsymbol{\lambda}, t}(\delta)\nonumber\\
    &=f_m(\delta)+\min\bigg\{\lambda_t+\gamma J_{m, \boldsymbol{\lambda}, t+1}(1), \gamma J_{m,\boldsymbol{\lambda}, t+1}(\delta+1)\bigg\},
\end{align}
where $\delta$ is the current AoI value and $J_{m,\boldsymbol{\lambda}, t}$ is the value function associated with the problem \eqref{sub-relaxedproblem}. The value function can be obtained by using backward induction method \cite{bertsekasdynamic1}.

\begin{lemma}\label{thresholdPolicy}
    Given the AoI value $\Delta_m(t)=\delta$, the scheduling decision $\pi_m(t)=1$ is optimal to \eqref{sub-relaxedproblem}, if the following holds: 
\begin{align}\label{thresholdpolicy}
J_{m,\boldsymbol{\lambda}, t+1}(\delta+1)-J_{m,\boldsymbol{\lambda}, t+1}(1)> \frac{\lambda_t}{\gamma}.
\end{align}
\end{lemma}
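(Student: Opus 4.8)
The plan is to prove Lemma \ref{thresholdPolicy} directly from the Bellman optimality equation \eqref{decomposedValue}. The value function satisfies
\begin{align*}
J_{m,\lambda}(\delta) = f_m(\delta) + \min\bigl\{\lambda + \gamma J_{m,\lambda}(1),\ \gamma J_{m,\lambda}(\delta+1)\bigr\},
\end{align*}
where the first term in the minimum corresponds to the action $\pi_m(t)=1$ (transmit, reset AoI to $1$ next slot, pay $\lambda$) and the second to $\pi_m(t)=0$ (idle, AoI grows to $\delta+1$). So the action $\pi_m(t)=1$ is optimal at state $\delta$ precisely when it attains the minimum, i.e. when
\begin{align*}
\lambda + \gamma J_{m,\lambda}(1) \le \gamma J_{m,\lambda}(\delta+1).
\end{align*}
Rearranging, this is equivalent to $J_{m,\lambda}(\delta+1) - J_{m,\lambda}(1) \ge \lambda/\gamma$. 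First I would state this correspondence carefully, noting that in the standard finite-horizon-to-infinite-horizon value iteration the greedy action with respect to the optimal value function is optimal for the discounted MDP (a textbook fact, e.g.\ \cite{bertsekasdynamic1}).

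The only subtlety is the strict-versus-weak inequality in \eqref{thresholdpolicy}: the lemma claims $\pi_m(t)=1$ is optimal when $J_{m,\lambda}(\delta+1)-J_{m,\lambda}(1) > \lambda/\gamma$, which is a strict inequality, so it is an immediate consequence of the weak inequality above (strict implies weak, hence $\pi_m(t)=1$ attains the minimum, in fact uniquely). Thus no tie-breaking argument is even needed for the stated direction; I would simply remark that when the inequality is strict, $\pi_m(t)=1$ is the unique minimizer and hence optimal, and when equality holds either action is optimal. If the paper also wants the converse (a genuine threshold structure — that $\pi_m(t)=1$ optimal at $\delta$ implies it optimal at all $\delta' \ge \delta$), that would require showing $J_{m,\lambda}(\delta+1) - J_{m,\lambda}(1)$ is nondecreasing in $\delta$, which follows from monotonicity and convexity-type properties of $J_{m,\lambda}$ established by induction on the value iteration together with monotonicity of $f_m$; but the statement as written does not ask for this, so I would keep the proof short.

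The main (and essentially only) obstacle is being precise about the claim that a one-step-lookahead greedy policy with respect to $J_{m,\lambda}$ is optimal for the infinite-horizon discounted subproblem \eqref{sub-relaxedproblem}. This is standard once one verifies the value iteration operator is a contraction on a suitable space — the per-stage cost $f_m(\delta) + \lambda\pi$ is bounded below and the discount factor $\gamma < 1$ ensures the Bellman operator is a $\gamma$-contraction in sup norm, so $J_{m,\lambda}$ is its unique fixed point and the stationary policy that selects a minimizing action in \eqref{decomposedValue} at every state is optimal. I would cite \cite{bertsekasdynamic1} for this and devote at most a sentence to it. Everything else is the algebraic rearrangement of the minimization condition, which is routine.
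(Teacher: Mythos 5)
Your argument is correct and is exactly the paper's route: the paper's proof is a one-line appeal to the Bellman equation \eqref{decomposedValue}, and you have simply spelled out the rearrangement of the minimization condition (with the correct observation that the strict inequality in \eqref{thresholdpolicy} makes $\pi_m(t)=1$ the unique minimizer, so no tie-breaking is needed). Your additional remarks on the contraction property and the optimality of the greedy policy are standard and consistent with the paper's citation of \cite{bertsekasdynamic1}.
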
}
\begin{proof}
    Lemma \ref{thresholdPolicy} holds due to \eqref{decomposedValue}.
\end{proof}

{\blue Lemma \ref{thresholdPolicy} implies that if the difference between $$J_{m,\boldsymbol{\lambda}, t+1}(\delta+1)-J_{m,\boldsymbol{\lambda}, t+1}(1)$$ exceeds a threshold value $\frac{\lambda_t}{\gamma}$, then scheduling source $m$ at time $t$ is optimal to the problem \eqref{sub-relaxedproblem}.} However, the policy can not be applied to the main problem due to scheduling constraint. Towards this end, following \cite{shisher2025computation, OrneeMILCOM, shisher2023learning}, we define gain index. 

{\blue\begin{definition}[\textbf{Gain Index}]\label{gainindex}
 For AoI value $\delta$ and Lagrangian multiplier $\boldsymbol{\lambda}$, the gain index is given by
\begin{align}\label{gainindex}
\alpha_{m, \boldsymbol{\lambda}, t}(\delta):=J_{m,\boldsymbol{\lambda}, t+1}(\delta+1)-J_{m,\boldsymbol{\lambda}, t+1}(1).
\end{align}
\end{definition}}

The gain index $\alpha_{m, \boldsymbol{\lambda}, t}(\delta)$ quantifies the total reduction in the discounted expected sum of inference errors from time $t+1$ to time $T-1$, when action $\pi_{m}(t)=1$ is chosen at time $t$ over $\pi_{m}(t)=0$, where the latter implies $m$-th source is not scheduled at time $t$. This metric enables strategic resource utilization at each time slot to enhance overall system performance. {\blue Prior works \cite{shisher2023learning, OrneeMILCOM, shisher2025computation} used action-value function to define the gain index, whereas \cite{brown2020index} called it Lagrangian index.} In \cite{shisher2023learning, OrneeMILCOM, gast2021lp, brown2020index}, source $m$ is scheduled if the difference between two action value $Q_{m, \boldsymbol{\lambda}, t}(\delta, 0)-Q_{m, \boldsymbol{\lambda}, t}(\delta, 1)$ becomes greater than $0$, where 
\begin{align}
    Q_{m, \boldsymbol{\lambda}, t}(\delta, 0)&=f_m(\delta)+ \gamma J_{m,\boldsymbol{\lambda}, t}(\delta+1)\\
    Q_{m, \boldsymbol{\lambda}, t}(\delta, 1)&=f_m(\delta)+\lambda_t+\gamma J_{m, \boldsymbol{\lambda}, t}(1).
\end{align}
Because the instantaneous penalty $f_m(\delta)$ incurred at time $t$ is same for whichever decision we take, the policies derived from ($Q_{m, \boldsymbol{\lambda}, t}(\delta, 0)-Q_{m, \boldsymbol{\lambda}, t}(\delta, 1)$) and our gain index-based single source scheduling are equivalent.

For the main problem, our algorithm is provided in Algorithm \ref{alg:MGF}. We call it Maximum Gain First (MGF) Policy. At each time $t$, we schedule $N$ sources with highest gain index $\alpha_{m, \boldsymbol{\lambda}_{t,k}, t}(\Delta_m(t))$. {\blue If multiple sources have same gain index, we use a tie-breaker index function $\psi_{t, m}(\delta)$ that can rank sources with same gain index value at any time $t$. We can use simple random mixing or Markov random mixing policy to design a tie-breaker index function as discussed in \cite{brown2020index}.}

After every episode $k$, we update Lagrange multiplier by using sub-gradient ascent method as follows:
\begin{align}\label{lagrangeUpdate}
    \lambda_{t, k+1}=&\lambda_{t,k}\nonumber\\
    +&\frac{\theta}{k}\bigg(\gamma^t\sum_{m=1}^M \mathbf 1\bigg(\alpha_{m, \boldsymbol{\lambda}_{t,k}, t}(\Delta_m(t))>\frac{\lambda_{t,k}}{\gamma}\bigg)-\frac{N}{1-\gamma}\bigg),
\end{align}
where $\mathbf 1(\cdot)$ is an indicator function and $\theta/k>0$ is step size.

\subsection{Performance Analysis}
{\blue In this section, we analyze the performance of Algorithm \ref{alg:MGF} with an optimal policy.} 

{\blue Let $\pi_{\text{gain}}$ denote the policy presented in Algorithm \ref{alg:MGF} and $\mathrm{V}_{\mathrm{gain}}(\boldsymbol{\lambda}, T)$ represent the objective value under policy $\pi_{\text{gain}}$ associated with Lagrange multiplier $\boldsymbol{\lambda}$. 

\begin{theorem}\label{theorem2}
     Given a discount factor $\gamma\in(0.5, 1)$, an optimal tie-breaker function $\psi$, and the fixed ratio $r=N/M$, if $\boldsymbol{\lambda_k}=\boldsymbol{\lambda^*}$, then in episode $k$, we have  
    \begin{align}\label{theo2}
   &\frac{|\mathrm{V}_{\mathrm{gain}}(\boldsymbol{\lambda^*, T})-\mathrm{V}_{opt}(T)|}{M}\nonumber\\
   &\leq \frac{2(1-\gamma^T)}{(1-\gamma)}\frac{\sum_{m=1}^{M}O\bigg(\max_{n\neq m}\epsilon_{m, n}^2\bigg)}{M}+O(\beta\sqrt \frac{(1-r)r}{M}),
\end{align}
where $$\beta=\frac{2(2\gamma)^{T}}{2\gamma-1}.$$
\end{theorem}
\begin{proof}
    See Appendix \ref{ptheorem2}.
\end{proof}}
{\blue Theorem \ref{theorem2} shows the normalized optimality gap of our policy depends on the correlation parameter $\epsilon_{m, n}$ and the system size $M$. In general, for large system size i.e., $M\to \infty$, the optimality gap can be finite, specifically we have 
\begin{align}
    &\lim_{M\to \infty} \frac{|\mathrm{V}_{\mathrm{gain}}(\boldsymbol{\lambda^*, T})-\mathrm{V}_{opt}(T)|}{M}\nonumber\\
    &\leq \lim_{M\to\infty} \frac{2(1-\gamma^T)}{(1-\gamma)}\frac{\sum_{m=1}^{M}O\bigg(\max_{n\neq m}\epsilon_{m, n}^2\bigg)}{M}.
\end{align}
However, for a special case, if the sum of the correlation parameters satisfies 
\begin{align}\label{correlationCondition}
    \sum_{m=1}^{M}O\bigg(\max_{n\neq m}\epsilon_{m, n}^2\bigg)=o(M),
\end{align} 
we have 
\begin{align}
    \lim_{M\to\infty} \frac{2(1-\gamma^T)}{(1-\gamma)}\frac{\sum_{m=1}^{M}O\bigg(\max_{n\neq m}\epsilon_{m, n}^2\bigg)}{M}=0.
\end{align}
For distributed sensor networks with a large number of sensors, where only a few sensors exhibit mutual correlation, the special case \eqref{correlationCondition} can hold. 

\subsection{Algorithm Simplification}
In Algorithm \ref{alg:MGF}, we consider a large sequence of Lagrange multiplier $\boldsymbol{\lambda}=(\lambda_0, \lambda_2, \ldots, \lambda_{T-1})$. This leads to a Lagrangian dual problem \eqref{dualP} that is practically difficult to solve in optimality. To address this challenge, we can simplify Algorithm \ref{alg:MGF} by considering $\lambda_t=\lambda$ for all $t$. For this modification, our Lagrange update rule becomes
\begin{align}
    \lambda_{k+1}=&\lambda_k+\frac{\theta}{k}\bigg(\sum_{t=0}^{T-1}\gamma^t\sum_{m=1}^M \mathbf 1\bigg(\alpha_{m, \lambda_k}(\Delta_m(t))>\frac{\lambda_k}{\gamma}\bigg)\nonumber\\
    &-\frac{(1-\gamma^T)N}{1-\gamma}\bigg).
\end{align}}

\begin{algorithm}[t]
\SetAlgoLined
\SetAlgoNoEnd
\SetKwInOut{Input}{input}
\caption{Maximum Gain First Policy}\label{alg:MGF}
\emph{Input $\boldsymbol{\lambda_k}=0$}\\
\For{episode $k=1, 2 \ldots$}
{Evaluate $\alpha_{m, \boldsymbol{\lambda_{t,k}, t}}(\delta)$ for all $m$ and $\delta$\\
{\blue Update $\mathrm{Subgrad}(t) \gets 0$}\\
\For{every time $t$ in episode $k$}
{\emph{Update $\Delta_{m}(t)$ for all $m$}\\
\emph{Initialize $\pi_m(t)\gets 0$ for all $m$}\\
{\blue $\alpha_m \gets \alpha_{m, \boldsymbol{\lambda_{t,k}, t}}(\Delta_m(t))$}\\
Schedule $N$ sources with highest $\alpha_m$\\
{\blue Sources with same gain index are ranked\\
with a tie-breaker index function $\psi_{t, m}$} \\
{\blue $\mathrm{Subgrad}(t)\gets \sum_{m=1}^M\gamma^{t}\mathbf 1(\alpha_m>\frac{\lambda_{t,k}}{\gamma})-N$}
}\emph{Update $\boldsymbol{\lambda_{k+1}}$ using \eqref{lagrangeUpdate}}}
\end{algorithm}

\section{Design of Online Scheduling Policy with Unknown Penalty Function}\label{SchedulingUnknown}
%

As discussed in the previous section, Algorithm \ref{alg:MGF} can be applied when (i) the penalty function $g_m(\delta_1, \delta_2, \ldots, \delta_M)$ or its approximation $f_m(\delta_m)$ for all $m$ are known and (ii) the correlation $\epsilon_{m, n}$ among the sources are low. In this section, we establish an online scheduling policy that does not know the penalty function $g_m(\delta_1, \delta_2, \ldots, \delta_M)$, its approximation $f_m(\delta_m)$, and the distribution of the targets and the source observations. Moreover, we consider an arbitrary correlation structure, i.e., not limited to low correlation like previous section.  

If the penalty function or the observation sequences $(Y_{m,t}, (X_{n, t-\mu_n})_{n=1}^M)$ are not known to the scheduler, it is not possible to get an online estimate of the function $f_m$ defined in \eqref{option1}. This is because it requires AoI values of all $M-1$ sources equal to $1$ at some time slots $t$. But, it is impossible to schedule $M-1$ sources at any time $t$ unless $N=M-1$.
For the online setting, we consider a new function approximation: 
\begin{align}
    f_{m, k}(\delta)= \mathbb E_{\pi_k}[L(Y_{m,t}, a_{m,t})|\Delta_m(t)=\delta],
\end{align}
which is the expected inference error of target $m$ under policy $\pi_k$ at episode $k$ when the AoI of source $m$ is $\Delta_m(t)=\delta$. The approximated function $f_{m, k}(\delta)$ depends on the policy $\pi_k$ of episode $k$. Because the function $f_{m, k}(\delta)$ depends on the policy, it is not possible to get the function $f_{m, k}(\delta)$ before the episode $k$. Hence, the scheduler can only decide policy $\pi_k$ based on the estimation of the functions $f_{m, k-1}, f_{m, {k-2}}, \ldots, f_{m, {0}}$.

For designing online scheduling, we consider the following setting: 
\begin{itemize}
    \item The entire time horizon is divided into episodes $$k = 0, 1, 2, \ldots,$$ each consisting of $T$ time slots. The value of $T$ is chosen to be sufficiently large such that approximating the infinite horizon discounted penalty problem with a finite horizon of length $T$ introduces negligible error.
    \item {\blue Computing the loss function for all time $t$ and in real-time is computationally infeasible because ground truth target is not immediately available. To address this challenge, we adopt a bandit feedback structure. This structure allows the receiver to compute the loss function only for the scheduled time slots. The receiver stores the following quantities for every source $m$: the predicted output $a_{m, t}$, the most recent observation $X_{m, t-\Delta_m(t)}$, and $\pi_m(t)$ or equivalently $\Delta_m(t)$. Let's consider $\pi_m(j)=1$, i.e., source $m$ is scheduled at time $t=j$. In this case, the receiver obtains $X_{m, j}$ at time $t=j+1$. Since we have assumed the true target $Y_{m, t}$ can be inferred directly from $X_{m, t}$ (e.g., $Y_{m, t}$ can be a function of, or equivalent to, $X_{m, t}$), the receiver can use the stored $X_{m, j}$ to compute $Y_{m, j}$. With both the prediction $a_{m, j}$ and the computed true label $Y_{m, j}$, the receiver can exactly compute the loss $L(Y_{m, j}, a_{m, j})$. Crucially, this loss is only available at the scheduled time slots $j$ and not for all $t$, which defines the partial feedback characteristic of the bandit setting.}

    \item At the beginning of every episode $k$, the scheduler decides a policy $\pi_k \in \Pi$ by using an estimate of the function $f_{m, {k-1}}(\Delta_m(t))$.
\end{itemize}

\subsection{Online Scheduling for Lagrangian Relaxed Problem}
At first, we establish an online scheduling policy for Lagrangian relaxed problem. In each episode $k$, we replace $f_m(\Delta_m(t))$ in \eqref{sub-relaxedproblem} by the function $f_{m, {k-1}}(\Delta_m(t))$ and solve the following Lagrangian relaxed problem for each source $m$:
\begin{align}\label{sub-relaxedproblemonline}
\inf_{\pi_m \in \Pi_m}\mathbb{E}_{\pi} \left[\sum_{t=0}^{\infty}\gamma^t \bigg(f_{m,{k-1}}(\Delta_m(t))+\lambda \pi_m(t)\bigg)\right].
\end{align}


Since the true function $f_{m, {k-1}}(\delta)$ is unknown, we use an empirical estimate, $\hat f_{m, k-1}(\delta)$, calculated as the average loss for a given state $\delta$ during episode $k-1$:
\begin{align}\label{empiricalestimate}
    \hat f_{m, k-1}(\delta)=\frac{\sum_{t=T_{k-1}}^{T_k-1}\mathbf 1(\Delta_{m}(t)=\delta, \pi_{m}(t)=1) L(Y_{m,t}, a_{m,t})}{\mathcal{N}_{m,k}(\delta)},
\end{align}
where $T_k$ is the starting time of episode $k$ and the frequency $\mathcal{N}_{m,k}(\delta)$ is 
\begin{align}
    \mathcal{N}_{m,k}(\delta)=\max\left(\sum_{t=T_{k-1}}^{T_k-1}\mathbf 1(\Delta_{m}(t)=\delta, \pi_{m}(t)=1), 1\right).
\end{align}

{\blue However, the estimate $\hat f_{m, k-1}(\delta)$ will naturally deviate from the true value. Given a chosen exploration parameter $\eta\in(0, 1)$, we account for the uncertainty by defining  the following confidence radius
\begin{align}
d_{m, k}(\delta)=\sqrt{\frac{{\mathrm{ln} (\frac{2}{\eta}})}{2\mathcal{N}_{m,k}(\delta)}},
\end{align}
which shrinks as we gather more samples for that state. Moreover, the parameter $\eta$ also controls the confidence radius: The confidence radius increases as the parameter $\eta$ decreases.  

The confidence radius creates a ball of plausible values, $B_{m,k}(\delta)$: 
\begin{align}
    &B_{m,k}(\delta)\nonumber\\
    &=\{f_{m,k}(\delta): |f_{m,k}(\delta)-\hat f_{m, k}(\delta)|\leq d_{m, k}(\delta)\},
\end{align}
Lemma \ref{prob} shows that the actual value $f_{m,k}(\delta)$ is in the ball of plausible values, $B_{m,k}(\delta)$ with probability $1-\eta$. As the parameter $\eta$ decreases, the ball $B_{m,k}(\delta)$ expands and the uncertainty reduces. However, as we will see next that the optimistic estimate of the actual value deviates away from the empirical estimate as $\eta$ decreases.}

Following the principle of ``optimism in the face of uncertainty" \cite{auer2006logarithmic, auer2008near, auer2002finite}, we select an optimistic estimate, $\tilde f_{m, {k-1}}(\delta)$, given by
\begin{align}\label{estimatef}
   \tilde f_{m, {k-1}}(\delta)=\max\{\hat f_{m, k-1}(\delta)-d_{m, k}(\delta), 0\}.
\end{align}
{\blue As the parameter $\eta$ decreases, the confidence radius $d_{m, k}(\delta)$ increases and the estimate $\tilde f_{m, {k-1}}(\delta)$ deviates away from empirical value.} 

This optimistic value $\tilde f_{m, {k-1}}(\delta)$ is used in a value iteration algorithm to update our policy, ensuring we explore efficiently while accounting for statistical uncertainty: 
\begin{align}\label{estimateV}
    &J_{m, \mathrm{optimistic}, k}(\delta)=\tilde f_{m, {k-1}}(\delta)\nonumber\\
    &+\min\bigg\{\lambda+\gamma J_{m, \mathrm{optimistic}, k}(1), \gamma J_{m, \mathrm{optimistic}, k}(\delta+1)\bigg\},
\end{align}
where $J^{(m)}(\delta)$ can obtained by using value iteration algorithm \cite{bertsekasdynamic1}. 

{\blue To ensure convergence of the estimated value function, we introduce a weighting parameter $\zeta \in (0, 1)$. This parameter controls the trade-off between relying on the optimistic value function and retaining information from the history. Using $\zeta$, the estimated value function is defined as follows for all $\delta$:}
\begin{align}\label{estimatevalue}
    \tilde J_{m, \lambda, k}(\delta)= \zeta^k J_{m,\mathrm{optimistic}, k}(\delta)+(1-\zeta^k)\tilde J_{m, \lambda, k-1}(\delta),
\end{align}
where $0<\zeta<1$. This step is necessary to ensure convergence to a stable solution.


\begin{algorithm}[t]
\SetAlgoLined
\SetAlgoNoEnd
\SetKwInOut{Input}{input}
\caption{Online Policy for \eqref{sub-relaxedproblemonline}}\label{alg:threshold}
\For{episode $k=1, 2 \ldots$}
{\emph{Update} $\tilde f_{m, k-1}$ using \eqref{estimatef}.\\
{Evaluate $\tilde J_{m,\lambda,k}$ for all $m$ and $\delta$} using \eqref{estimatevalue}. \\
\For{every $t$ in episode $k$}
{\emph{Update $\Delta_{m}(t)$ for all $m$}\\
Initialize $\pi_m(t)\gets 0$ for all $m$\\
\For {$m=1, 2 \ldots, M$}
{\If{$\tilde J_{m,\lambda,k}(\Delta_{m}(t)+1)-\tilde J_{m,\lambda,k}(1) > \frac{\lambda}{\gamma}$}
{$\pi_m(t)\gets 1$; \emph{Schedule Source $m$}}}
}}
\end{algorithm}

\begin{algorithm}[t]
\SetAlgoLined
\SetAlgoNoEnd
\SetKwInOut{Input}{input}
\caption{Online Maximum Gain First Policy}\label{alg:gain}
\emph{Input: Lagrange Multiplier $\lambda_1=0$} \\
\For{episode $k=1, 2 \ldots$}
{\emph{Update $\tilde f_{m, k-1}$ by using \eqref{estimatef}}\\
{Evaluate $\tilde J_{m,\lambda,k}$ for all $m$ and $\delta$} by using \eqref{estimatevalue}\\
\For{every $t$ in episode $k$}
{\emph{Update $\Delta_{m}(t)$ for all $m$}\\
{Compute Gain index $\alpha_m:=\tilde J_{m,\lambda,k}(\Delta_{m}(t)+1)-\tilde J_{m,\lambda,k}(1)$}\\
{Schedule $N$ sources with highest $\alpha_m$}
\\
$\mathrm{Subgrad}\gets \mathrm{Subgrad}+\sum_{m=1}^M\gamma^{t}\mathbf 1(\alpha_m>\frac{\lambda}{\gamma})$
}}\emph{Update $\lambda_{k+1}\gets \lambda_k+(\theta/k)(\mathrm{Subgrad}-\frac{N}{1-\gamma})$}
\end{algorithm}

  

Following \eqref{estimatevalue}, policy $\pi_k$ is provided in Algorithm \ref{alg:threshold}. The actual value function of policy $\pi_k$ associated with the problem \eqref{sub-relaxedproblemonline} can be written as 

\begin{align}\label{actualvalue}
    &J_{m,\lambda, k}(\delta)\nonumber\\
    &=f_{m, {k}}(\delta)+\min\bigg\{\lambda+\gamma J_{m, \lambda, k}(1), \gamma J_{m,\lambda, k}(\delta+1)\bigg\}.
\end{align}


Let $J^*$ be the optimal value function. Now, we are ready to establish the following results. 
\begin{theorem}\label{convergence}
    The following assertions are true.
    
   (a) There exists a value $J_{m} \in \mathbb R$ such that $$\lim_{k\to \infty}\tilde J_{m, \lambda_k, k}(\delta)=J_{m}(\delta).$$
   
   (b) {\blue Given an exploration parameter $\eta\in (0, 1)$, for any episode $k$ with probability $1-\eta$, we have} 
        \begin{align}
            &\bigg|J_{m,\lambda, {k}}(\delta)-J_{m,\lambda, {k-1}}(\delta)\bigg|\nonumber\\
            &\leq \frac{1}{1-\gamma}\bigg(2\max_{\delta} d_{m, k-1}(\delta)+\beta_k\bigg),
        \end{align}
        where 
        \begin{align}\label{difference}
           \beta_k=\max_{\delta} |f_{m, {k-1}}(\delta)-f_{m, {k}}(\delta)|
        \end{align}

\end{theorem}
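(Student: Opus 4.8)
The plan is to treat the two parts in sequence, since part (a) establishes the convergence that part (b) then quantifies per-episode. For part (a), the strategy is to view the update \eqref{estimatevalue}, $\tilde J_{m,\lambda,k} = \zeta^k J_{m,\mathrm{optimistic},k} + (1-\zeta^k)\tilde J_{m,\lambda,k-1}$, as a stochastic-approximation recursion. First I would establish that the family of value functions $\{\tilde J_{m,\lambda,k}(\delta)\}$ is uniformly bounded: since each $\tilde f_{m,k-1}(\delta) \in [0, f_{\max}]$ for some finite $f_{\max}$ (the loss is bounded and the optimistic truncation keeps it nonnegative), the fixed-point operator in \eqref{estimateV} is a $\gamma$-contraction on a bounded domain, so $J_{m,\mathrm{optimistic},k}(\delta) \le f_{\max}/(1-\gamma)$ uniformly in $k$, and the convex combination preserves this bound. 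Next I would argue that the confidence radii $d_{m,k}(\delta) \to 0$ almost surely: every state $\delta$ up to the truncation bound $\delta_{\mathrm{bound}}$ is visited infinitely often under the threshold policy (the AoI deterministically climbs and is reset, so each small AoI value recurs), hence $\mathcal N_{m,k}(\delta) \to \infty$, forcing $\hat f_{m,k}(\delta) \to f_{m,\infty}(\delta)$ for whatever limiting sampling distribution obtains. The weights $\zeta^k$ are summable, $\sum_k \zeta^k < \infty$ with $\zeta < 1$, which makes the recursion a contraction-with-vanishing-perturbation: writing $\tilde J_{m,\lambda_k,k} - \tilde J_{m,\lambda_{k-1},k-1} = \zeta^k(J_{m,\mathrm{optimistic},k} - \tilde J_{m,\lambda_{k-1},k-1})$ and bounding the bracket by the uniform bound $2f_{\max}/(1-\gamma)$, the telescoped differences $\sum_k \|\tilde J_{m,\lambda_k,k} - \tilde J_{m,\lambda_{k-1},k-1}\|_\infty \le \sum_k \zeta^k \cdot 2f_{\max}/(1-\gamma) < \infty$, so $\{\tilde J_{m,\lambda_k,k}\}$ is Cauchy in sup-norm and converges to some $J_m$.

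For part (b), the approach is a direct contraction-coupling argument comparing the two \emph{actual} value functions $J_{m,\lambda,k}$ and $J_{m,\lambda,k-1}$ defined by the Bellman equations \eqref{actualvalue} with penalties $f_{m,k}$ and $f_{m,k-1}$ respectively. The standard perturbation bound for discounted MDPs gives $\|J_{m,\lambda,k} - J_{m,\lambda,k-1}\|_\infty \le \frac{1}{1-\gamma}\|f_{m,k} - f_{m,k-1}\|_\infty$ when the two problems differ only in their per-stage cost — this follows because both Bellman operators are $\gamma$-contractions and their fixed points differ by at most the sup-distance of the operators, which here is exactly $\max_\delta|f_{m,k}(\delta) - f_{m,k-1}(\delta)|$. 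The remaining task is to relate this to the quantity in the theorem statement, $\frac{1}{1-\gamma}(2\max_\delta d_{m,k-1}(\delta) + \beta_k)$. I would split $\|f_{m,k} - f_{m,k-1}\|_\infty \le \|f_{m,k} - f_{m,k-1}\|_\infty$; but the bound as stated actually suggests comparing through the empirical estimates: on the high-probability event that $|f_{m,k-1}(\delta) - \hat f_{m,k-1}(\delta)| \le d_{m,k-1}(\delta)$ for all $\delta$ (which holds with probability $1 - O(\eta)$ by Hoeffding's inequality, union-bounded over the $\delta_{\mathrm{bound}}$ states), one writes the per-stage gap in terms of the observed data plus the confidence width, yielding the $2\max_\delta d_{m,k-1}(\delta)$ term, with $\beta_k = \max_\delta|f_{m,k-1}(\delta) - f_{m,k}(\delta)|$ capturing the genuine policy-induced drift of the penalty between consecutive episodes.

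The main obstacle I anticipate is \textbf{part (a)}, specifically justifying that the empirical estimates converge at all. Unlike a fixed MDP, here the penalty function $f_{m,k}$ itself depends on the policy $\pi_k$, which depends on $\tilde J_{m,\lambda_k,k}$, so there is a feedback loop between the quantity being estimated and the estimator. The clean resolution is that the smoothing in \eqref{estimatevalue} with summable weights $\zeta^k$ decouples this loop: the policy changes less and less across episodes, so the sampling distribution over AoI states stabilizes, the $f_{m,k}$ sequence becomes asymptotically stationary, and the Cauchy argument goes through. Making this rigorous requires showing $\beta_k \to 0$ — i.e., that consecutive-episode penalty drift vanishes — which itself follows from the convergence of the value functions and the threshold structure of the policy (a converging threshold induces a converging stationary AoI distribution), so parts (a) and (b) are mildly entangled and must be proven together or in the right order, with the boundedness and infinite-visitation facts established first as lemmas.
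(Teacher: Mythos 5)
Your proposal is essentially sound and overlaps substantially with the paper's proof, but the two diverge in instructive ways. For part (a) the paper simply notes that the sub-gradient iterates $\lambda_k$ converge and that $\zeta^k \to 0$, and concludes from \eqref{estimatevalue} that $\tilde J_{m,\lambda_k,k}-\tilde J_{m,\lambda_{k-1},k-1}\to 0$; your Cauchy argument --- uniform boundedness of $J_{m,\mathrm{optimistic},k}$ plus summability $\sum_k \zeta^k<\infty$ giving absolutely summable telescoping differences --- is the rigorous version of this and is the right way to do it, since vanishing consecutive differences alone would not imply convergence. One small omission: the uniform bound on $J_{m,\mathrm{optimistic},k}$ requires $\lambda_k$ to stay bounded, which the paper obtains from convergence of the sub-gradient ascent; you should state this. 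The machinery you flag as the ``main obstacle'' (infinite visitation, $d_{m,k}\to 0$, stabilization of $f_{m,k}$) is not actually needed: part (a) only claims convergence of $\tilde J$, not convergence to the true or optimal value function, and your Cauchy argument goes through regardless of whether the empirical estimates converge. For part (b) the paper splits $|J_{m,\lambda,k}(\delta)-J_{m,\lambda,k-1}(\delta)|$ by the triangle inequality through the intermediate $J_{m,\mathrm{optimistic},k}$ and applies the $\frac{1}{1-\gamma}\max_{\delta'}|f(\delta')-f'(\delta')|$ perturbation bound twice, picking up one $\max_{\delta}d_{m,k-1}(\delta)$ from $|\tilde f_{m,k-1}-f_{m,k-1}|$ and one more (plus $\beta_k$) from $|f_{m,k}-\tilde f_{m,k-1}|\le \beta_k + d_{m,k-1}$, all on the Hoeffding event of probability $1-O(\eta)$; that is exactly where $2\max_\delta d_{m,k-1}(\delta)+\beta_k$ comes from. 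Your direct perturbation bound between the two fixed points of \eqref{actualvalue} gives the stronger, deterministic bound $\beta_k/(1-\gamma)$, which trivially implies the stated inequality --- so your route is valid and in fact sharper, provided \eqref{actualvalue} is read literally as a Bellman optimality equation with per-stage cost $f_{m,k}$. If instead $J_{m,\lambda,k}$ is meant as the evaluation of the possibly suboptimal policy $\pi_k$, which is greedy with respect to the optimistic estimate rather than $f_{m,k}$, the direct comparison no longer applies and the detour through $J_{m,\mathrm{optimistic},k}$ becomes necessary; your closing remark about ``comparing through the empirical estimates'' is the right instinct but should be made concrete as that two-step triangle inequality.
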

\begin{proof}
See Appendix \ref{convergenceproof}.
\end{proof}
Theorem \ref{convergence}(a) implies that the estimated value function $\tilde J_{m, \lambda_k, k}$ converges to a stable value function. As the policy $\pi_k$ depends on $\tilde J_{m, \lambda_k, k}$, we can say that the policy also converges to a stable solution, but it is not guaranteed to converge to an optimal policy.

Theorem \ref{convergence}(b) characterizes the distance between the value function of $k$-th episode and the value function of the $k-1$-th episode. According to the theorem, the distance is upper bounded by a confidence radius $d_{m, k-1}$ and $\beta_k$. The term $d_{m, k-1}$ is related to the uncertainty of the estimation of $f_{m, k-1}$. If number of time slot $T$ in every episode increases, the confidence radius $d_{m, k-1}$ decreases. On the other hand, we show in Fig. \ref{fig:beta} that $\beta_k$ decreases to $0$ for smaller $\zeta$.


\subsection{Online Maximum Gain First Policy}
Algorithm \ref{alg:threshold} can not be applied as it does not satisfy the constraint \eqref{Scheduling_constraint2}. For this reason, similar to Algorithm \ref{alg:MGF}, we utilize the concept of gain index defined in Definition \eqref{gainindex} and the structure of the Online Threshold Policy provided in Algorithm \ref{alg:threshold} and develop a new ``Online Maximum Gain First" policy in Algorithm \ref{alg:gain}.

Initially, we take Lagrange multiplier $\lambda_1=0$ as input. Then, the Online Maximum Gain First policy proceeds as follows:
\begin{itemize}
    \item At the beginning of every episode $k$, we update function $\tilde f_{m, k-1}$ by using \eqref{estimatef}, where we use the feedback of inference loss after episode $k-1$. 
    \item After that by using the function $\tilde f_{m, k-1}$, we compute value function $\tilde J_{m, \lambda, k}$ by using \eqref{estimatevalue}. 
    \item At every time in episode $k$, we update AoI value $\Delta_m(t)$ for all $m$ and compute the gain index 
    \begin{align}
        \alpha_m=\tilde J_{m, \lambda, k}(\Delta_m(t)+1)-\tilde J_{m, \lambda, k}(1).
    \end{align}
    \item $N$ sources with highest gain index are scheduled at every time $t$.
    \item After every episode, we update $\lambda_{k+1}$ using sub-gradient ascent method. 
\end{itemize}
At the initial episode $k=0$, the scheduler can apply Maximum Age First policy.

\section{Simulation Results}\label{simulation}
{\blue Though our theoretical analysis consider finite space $\mathcal X_m$ and $\mathcal Y_m$, our algorithms are general and can be applied to continuous space. Our algorithms only require countable AoI values. In this section, we will illustrate the performance of our Algorithm \ref{alg:MGF} and Algorithm \ref{alg:gain} on a continuous space $\mathcal X_m$ and $\mathcal Y_m$ setting.}   

\subsection{Simulation Model}

We use a model from \cite{tripathi2022optimizing} for designing the correlation structure between sources. 
At the beginning of every time-slot, each source $m$ collects
information about its own state. In addition, with probability $p_{m, n}$, the update collected by source $m$ also contains information about the current state of source $n$. Motivating examples of such correlated sources are cameras with overlapping fields of view and sensors with spatial correlation between the processes being monitored. Here are two examples:

{\blue 
{\bf Cameras with Overlapping Fields of View:} If camera $m$ successfully updates, there is a probability ($p_{m,n}$) that it also provides an updated view of targets associated with an adjacent camera $n$. 

{\bf Spatially Correlated Environmental Sensors:} Sensors monitoring processes (e.g., temperature, air quality, fire, rain) often exhibit spatial correlation. The probability ($p_{m,n}$) that one sensor's update contains useful, non-redundant information about an adjacent sensor's state. }

We denote the state of source $m$ by $Z_{m, t}$. The state $Z_{m, t}$ is considered to evolve as
\begin{align}\label{simsystem}
    Z_{m, t}= a_m Z_{m, t-1}+W_{m, t},
\end{align}
where $W_{m, t}$ follows zero mean Gaussian distribution with variance $1$ and $W_{m, t}$ are i.i.d. over time $t$ and independent over source $m$.  Let $X_{m, t}$ be the update from source $m$. The update $X_{n,t}$ includes state $Z_{n, t}$ of the source $n$ and $Z_{m, t}$ of source $n$ with probability $p_{m, n}$. A value of $p_{m, n}=0$ suggests that there is never any information at source $n$ about the state of source $m$, while a value of $p_{m, n}=1$  suggests that source $n$ has complete information about state of source $m$ at all times. In all of our simulation, we consider source $1$ gets information about states of other sources with probability $p_{m,1}=p$. Sources except $1$ never get any information of other sources, i.e., $p_{m, n}=0$ for all $n\neq 1$ and $m\neq n$. In this simulation, we have used discount factor $\gamma=0.7$.

\begin{figure}[t]
    \centering
\includegraphics[width=0.35\textwidth]{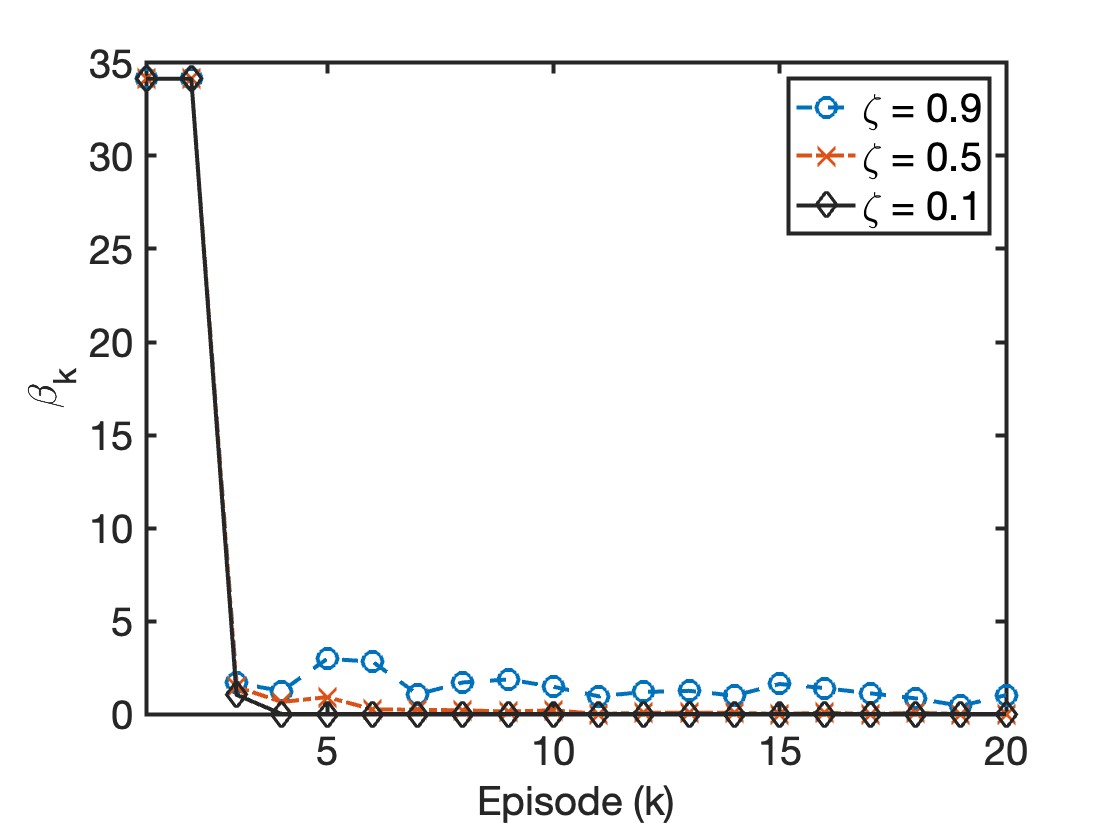}
\caption{{$\beta_k$ vs. Iteration $k$}}\label{fig:beta}
\end{figure}

\begin{figure}[t]
    \centering
\includegraphics[width=0.35\textwidth]{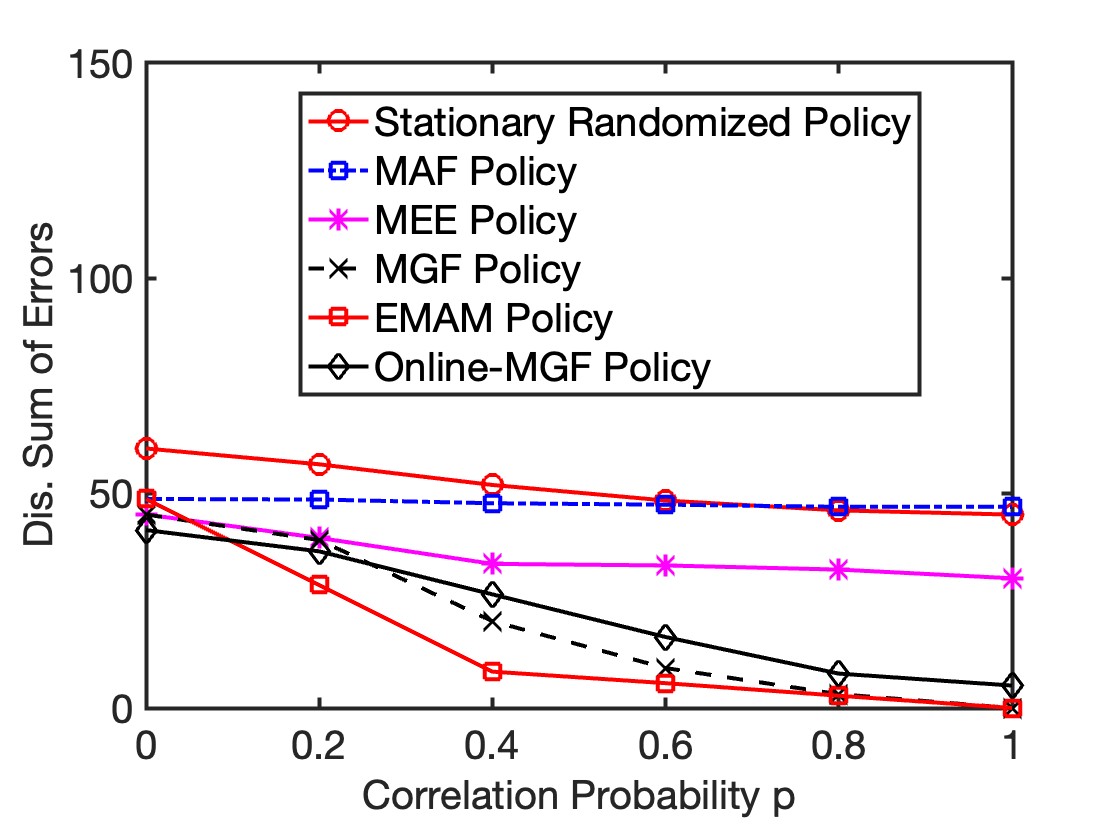}
\caption{Dis. Sum of Errors vs. $p$ with $M=10$ with $a^2_m=0.9$ in half of sources and other half with $a^2_m=0.7$. }\label{fig:prob1}
\end{figure}

\begin{figure}[t]
    \centering
\includegraphics[width=0.35\textwidth]{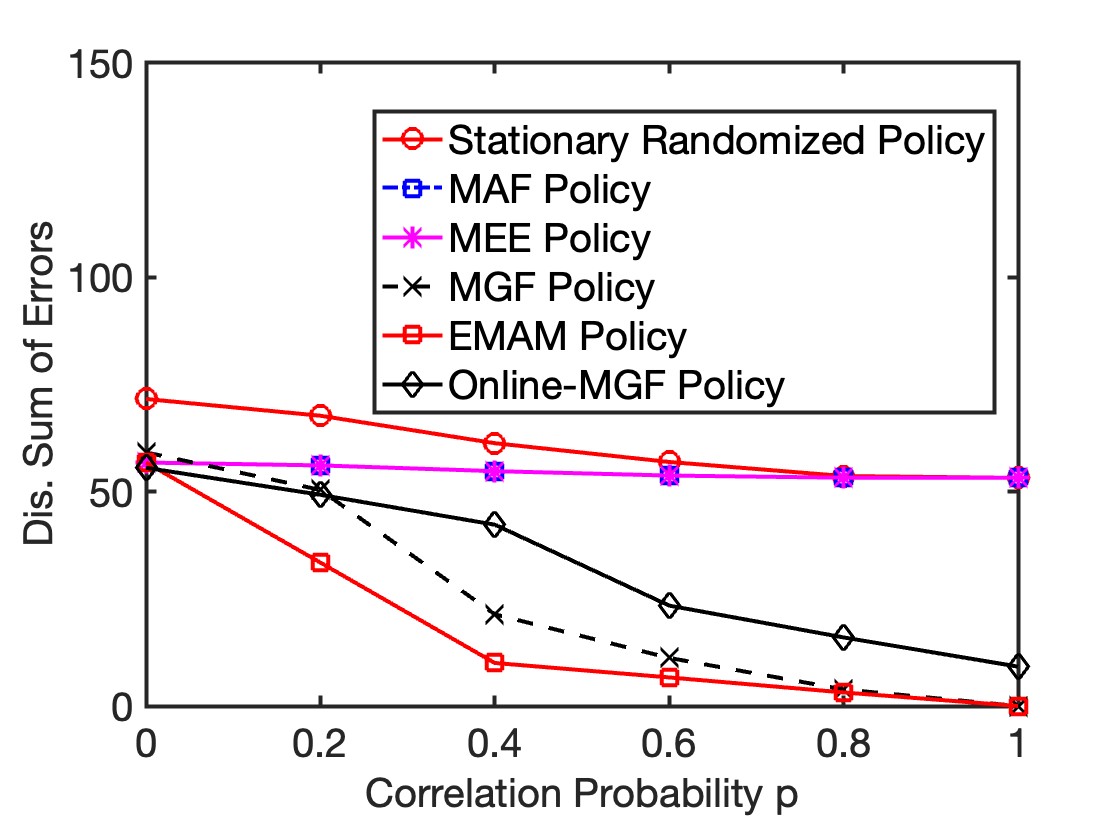}
\caption{Dis. Sum of Errors vs. $p$ with $M=10$ with $a^2_m=0.9$ in all sources.}\label{fig:prob2}
\end{figure}

{Now, we discuss the relationship of the correlation parameter $\epsilon_{m, n}$ and the correlation probability $p_{m, n}$. By definition \eqref{condition1}, we can write for the system discussed above as 
\begin{align}\label{epsilon}
    &\epsilon_{m, 1}^2=\nonumber\\
    &\max_{\delta_1, \delta_2, \ldots, \delta_M} I_L(Z_{m, t}; X_{1, t-1}|X_{1, t-\delta_1}, \ldots, X_{M, t-\delta_M}).
\end{align}
We can decompose the conditional mutual information as 
\begin{align}\label{CMIdecompose}
    &I_L(Z_{m, t}; X_{1, t-1}|X_{1, t-\delta_1}, \ldots, X_{M, t-\delta_M})\nonumber\\
    &=H_L(Z_{m, t}|X_{1, t-\delta_1}, \ldots, X_{M, t-\delta_M})\nonumber\\
    &~~~~~-p H_L(Z_{m, t}|Z_{m, t-1}, Z_{1, t-1}, X_{1, t-\delta_1},  \ldots, X_{M, t-\delta_M})\nonumber\\
    &~~~~~-(1-p)H_L(Z_{m, t}|Z_{1, t-1}, X_{1, t-\delta_1}, \ldots, X_{M, t-\delta_M})\nonumber\\
    &=p(H_L(Z_{m, t}|X_{1, t-\delta_1},\! \ldots, \!X_{M, t-\delta_M})\!\!-\!\!H_L(Z_{m, t}|Z_{m, t-1})),
\end{align}
where the last inequality holds because $Z_{m, t}$ and $Z_{1, t-1}$ are independent for all $m\neq 1$ and given $Z_{m, t-1}$, $Z_{m, t}$ is independent of $(X_{1, t-\delta_1}, \ldots, X_{M, t-\delta_M})$ for all $\delta_n\geq 1$. From \eqref{epsilon} and \eqref{CMIdecompose}, it implies that the correlation parameter $\epsilon_{m,1}$ increases as the correlation probability $p$ increases. When $p$ is zero, $\epsilon_{m, 1}$ is also zero. Also, $\epsilon_{m, 1}$ is maximized when $p$ is $1$.}

\subsection{Performance Evaluation}
In this section, we first evaluate the convergence of our ``Online Maximum Gain First" policy provided in Algorithm \ref{alg:gain}. Then, we compare the performance of following six policies. In our simulation, we consider $N=1$. 

\begin{itemize}[leftmargin=7mm]
\item {\textit{Exponential Moving Average Max-weight (EMAM) Policy:} The policy follows \cite[Algorithm 1]{tripathi2022optimizing}. The policy knows the probabilistic correlated structure within sources.} 
\item {\textit{Maximum Expected Error (MEE) Policy:} The policy is obtained from \cite{vishrantcorrelated}, where we use that fact that $W_{m, t}$ are independent over sources.}
\item \textit{Maximum Age First (MAF) Policy:} The policy schedules the source with highest AoI value. 
\item \textit{Stationary Randomized Policy:} 
Randomized policy selects client $m$ with probability $\beta_m/\sum_{m=1}^M \beta_m$, for every client $m$ with $\beta_m>0$. We consider $\beta_m=1$.
\item {\textit{Maximum Gain First (MGF) Policy:} The policy follows Algorithm \ref{alg:MGF}. The policy has access to the penalty function $f_m(\delta)$ defined in \eqref{option1}.} 
\item \textit{Online Maximum Gain First (Online-MGF) Policy:} The policy follows Algorithm \ref{alg:gain}. At episode $k=0$, Algorithm \ref{alg:gain} applies MAF policy. 
\end{itemize}

\begin{figure}[t]
    \centering
\includegraphics[width=0.35\textwidth]{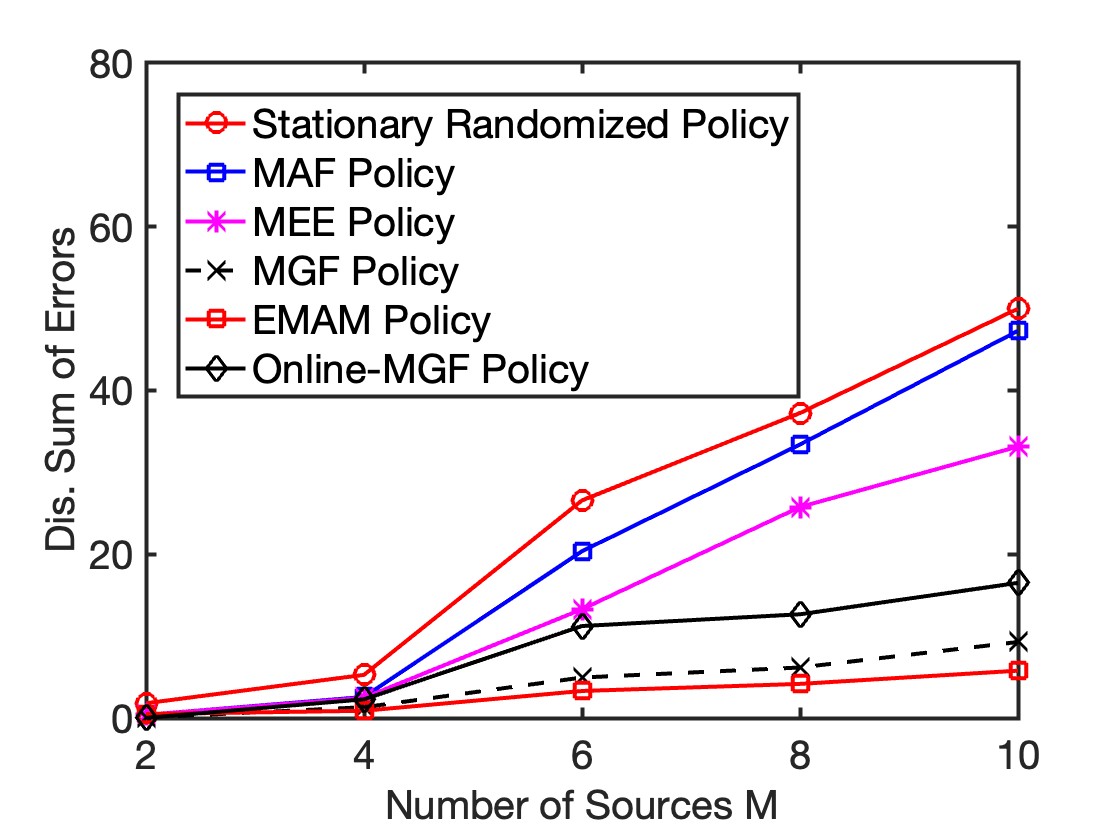}
\caption{Dis. Sum of Errors vs. $M$ with $p=0.6$}\label{fig:sources6}
\end{figure}

\begin{figure}[t]
    \centering
\includegraphics[width=0.35\textwidth]{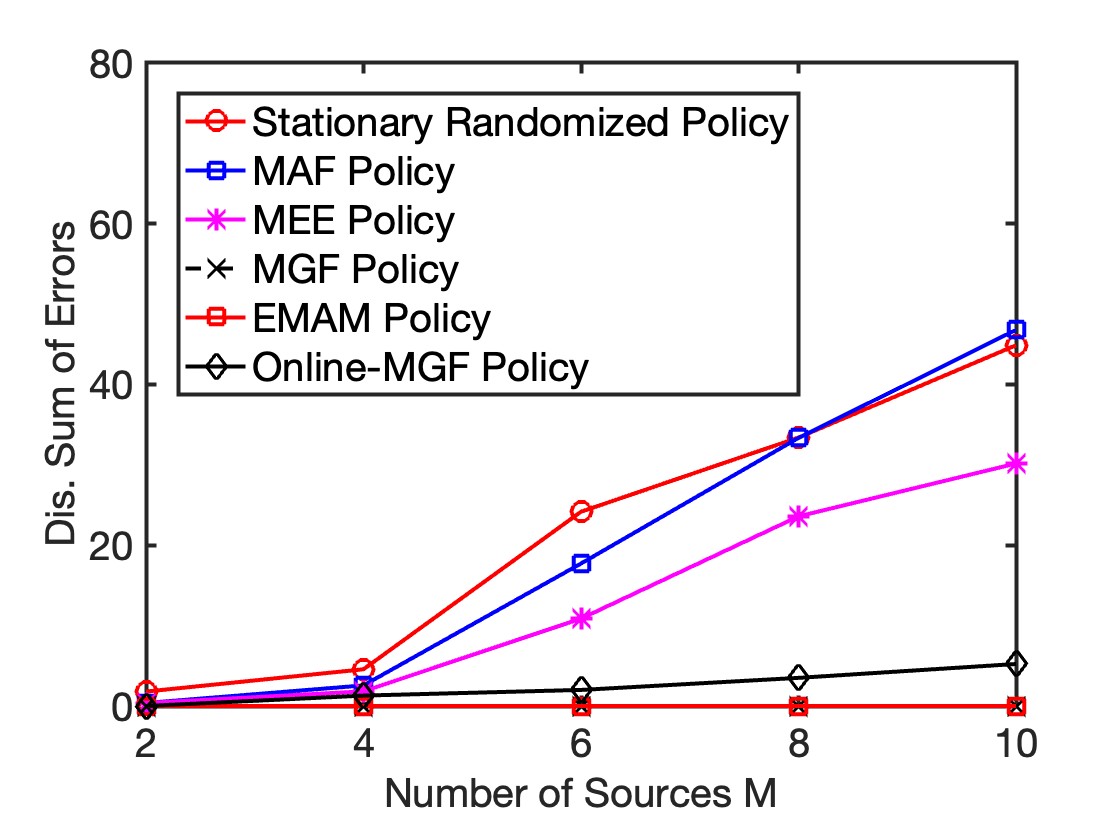}
\caption{Dis. Sum of Errors vs. $M$ with $p=1$}\label{fig:sources1}
\end{figure}

{Fig. \ref{fig:beta} illustrates the convergence behavior of our online-MGF policy. In Fig. \ref{fig:beta}, we plot $\beta_k$ defined in \eqref{difference} versus episode $k$, where $\beta_k$ is evaluated empirically $\beta_k=\max_{\delta} |\hat f_{m, {k-1}}(\delta)-\hat f_{m, {k}}(\delta)|$.
As $\beta_k$ converges to zero, policy $\pi_k$ converges. This is because the policy $\pi_k$ and $\pi_{k-1}$ are different only if $f_{m, k}$ and $f_{m, k-1}$ are different for a given Lagrange multiplier $\lambda$. From Fig. \ref{fig:beta}, we observe that smaller values of $\zeta$ ensures faster converges. 

Figs. \ref{fig:prob1}-\ref{fig:prob2} plot the discounted sum of inference errors versus the correlation probability $p$. The results show that the Online-MGF, MGF, and EMAM policies significantly outperform the MAF, Random, and MEE policies. EMAM becomes the best-performing policy, which is expected since it is specifically designed for this correlation model. A key advantage of the Online-MGF and MGF policies is their generality. Unlike the EMAM policy, which is tailored for a specific probabilistic correlation structure, our policies are derived without assuming any model and are thus applicable to any correlation type, but the policy EMAM can not be applied to other correlation structure. Despite this model-agnostic design, it is noteworthy that both MGF and Online-MGF perform close to EMAM. Furthermore, the Online-MGF policy performs comparably to the MGF policy without requiring knowledge of the penalty function. }

{Figs. \ref{fig:sources1}-\ref{fig:sources6} plot discounted sum of inference errors vs. number of sources $M$ with $p=0.6$ and $p=1$, respectively. Here, we set $a^2_m=0.9$ for half of the sources and $a^2_m=0.7$ for the other half. Similar to Figs. \ref{fig:prob1}-\ref{fig:prob2}, Figs. \ref{fig:sources1}-\ref{fig:sources6} also illustrate that the Online-MGF, MGF, and EMAM policies significantly outperform the MAF, Random, and MEE policies for any number of sources. As expected, EMAM achieves the best performance by leveraging its knowledge of the probabilistic correlation structure. When $p=1$, both the MGF and EMAM policies achieves the lowest inference error possible by scheduling the source with correlated updates.}

\section{Conclusion and Future Work}
{This paper investigated AoI-based signal-agnostic scheduling policies for correlated sources to minimize inference errors for multiple time-varying targets. We developed scheduling policies for two scenarios: (i) when the scheduler has full knowledge of the penalty functions and (ii) when the scheduler has no knowledge of these functions. In the first scenario, we proposed an MGF policy and we showed an upper bound of the optimality gap in an asymptotic region. In the second scenario, we first developed an online threshold-based scheduling policy for a relaxed version of the problem. We then leveraged the structure of this policy to design a novel online-MGF policy for the original problem. Simulation results demonstrate that our online policy effectively learns and exploits the correlation structure and system dynamics to achieve significant performance gains compared to Maximum Age First and random policies.}

Signal-aware scheduling problem is an important future direction, where the centralized scheduler has access to the signal values of the sources. While existing research \cite{orneeTON2021, SunTIT2020, OrneeMILCOM} has demonstrated that signal-aware scheduling can improve performance compared to AoI-based scheduling, it often comes at the cost of increased complexity. In contrast, AoI-based scheduling offers the advantages of low complexity and privacy preservation, which are important considerations in many applications. Moreover, distributed scheduling of correlated sources is an interesting future direction. 

\bibliographystyle{IEEEtran}
\bibliography{refshisher}

\begin{appendices}

\section{How to Compute Penalty Function $f_m(\delta)$}\label{approximatefunction}

It is possible to get closed form expressions of $f_m(\delta_m)$ for some well known random processes. An example is provided below:

\begin{example}
    Let observation of $m$-th source $X_{m, t}$ and the target $Y_{m,t}=X_{m,t}$ evolves as follows:
\begin{align}
    X_{m, t}&=a_m X_{m, t-1}+W_{m, t},
\end{align}
where the noise vector $\mathbf W_t=[W_{1,t}, W_{2,t}, \ldots, W_{M,t}]$ is an i.i.d. multi-variate normal random variable across time, i.e., $\mathbf W_t \sim \mathcal N(0, Q)$ and $Q$ is the covariance matrix. Let $q_{i,j}$ denote the $(i,j)$-th element of the covariance matrix $Q$. 
For this example, if $L(y, \hat y)=(y-\hat y)^2$ is a quadratic loss, then we have 
\begin{align}\label{derviationLTI}
\!\!  f_m(\delta)=
  \begin{cases}
       \bar q_{m,m} \delta, &\text{if}~a_m=1,\\
       \bar q_{m,m} \frac{a_m^{2\delta}-1}{a_m^2-1}, &\text{otherwise,}\\
    \end{cases}
    \end{align} 
    where $$\bar q_{m,m}=q_{m,m}-\mathbf q_m Q^{-1}_{-1} \mathbf q_m^T$$, $$\mathbf q_m=[q_{m,1}, \ldots, q_{m,m-1}, q_{m, m+1}, \ldots, q_{m, M}]$$ is the covariance of the $m$-th noise process with other process, and $Q^{-1}_{-1}$ is the noise covariance sub-matrix of the other processes excluding source $m$. 
\end{example}

{\bf Empirical Estimate of $f_m(\delta)$}: We can also estimate the penalty function $f_m(\delta)$ by using dataset collected in offline. 
\begin{example}
  Let $\{(Y_{m, t}, X_{m, t})_{m=1}^M\}_{t=1}^n$ be the dataset that contains $n$ samples of signal values from time $t=1$ to $t=n$. Then, the empirical estimate of $f_m(\delta)$ is given by
  \begin{align}
      f_m(\delta)=\frac{\sum_{t=\delta+1}^n L(Y_{m, t}, \phi_m(X_{m, t-\delta}, Z_{-m, t-1}))}{n-\delta}.
  \end{align}
\end{example}

\section{Proof of Lemma \ref{approx}}\label{papprox}

For the simplicity of presentation, we prove Lemma \ref{approx} for $M=3$ and $m=1$.

Part (a): we have 
\begin{align}\label{l3}
    g_1(\delta_1, \delta_2, \delta_3)&=H_L(Y_{1,t}|X_{1, t-\delta_1},X_{2, t-\delta_2}, X_{3, t-\delta_3})\nonumber\\
    &\geq H_L(Y_{1,t}|X_{1, t-\delta_1},X_{2, t-1}, X_{3, t-\delta_3})\nonumber\\
    &\geq H_L(Y_{1,t}|X_{1, t-\delta_1},X_{2, t-1}, X_{3, t-1})\nonumber\\
    &=f_m(\delta_1),
\end{align} 
where the inequalities hold due to \cite[Lemma 12]{dawid1998}. 

Part (b): We have 
\begin{align}\label{l3}
    g_1(\delta_1, \delta_2, \delta_3)&=H_L(Y_{1,t}|X_{1, t-\delta_1},X_{2, t-\delta_2}, X_{3, t-\delta_3})\nonumber\\
    &=H_L(Y_{1,t}|X_{1, t-\delta_1},X_{2, t-\delta_2}, X_{3, t-\delta_3})\nonumber\\
    &+H_L(Y_{1,t}|X_{1, t-\delta_1},X_{2, t-\delta_2}, X_{3, t-\delta_3}, X_{2, t-1})\nonumber\\
    &\quad-H_L(Y_{1,t}|X_{1, t-\delta_1},X_{2, t-\delta_2}, X_{3, t-\delta_3}, X_{2, t-1})\nonumber\\
    &\overset{(a)}{=}H_L(Y_{1,t}|X_{1, t-\delta_1},X_{2, t-1}, X_{3, t-\delta_3})\nonumber\\
    &+H_L(Y_{1,t}|X_{1, t-\delta_1},X_{2, t-\delta_2}, X_{3, t-\delta_3})\nonumber\\
    &-H_L(Y_{1,t}|X_{1, t-\delta_1},X_{2, t-\delta_2}, X_{3, t-\delta_3}, X_{2, t-1})\nonumber\\
    &\overset{(b)}{=}H_L(Y_{1,t}|X_{1, t-\delta_1},X_{2, t-1}, X_{3, t-\delta_3})\nonumber\\
    &+I_L(Y_{1, t}; X_{2, t-1}|X_{1, t-\delta_1}, X_{2, t-\delta_2}, X_{3, t-\delta_3})\nonumber\\
    &\overset{(c)}{=}H_L(Y_{1,t}|X_{1, t-\delta_1},X_{2, t-1}, X_{3, t-1})\nonumber\\
    &+I_L(Y_{1, t}; X_{3, t-1}|X_{1, t-\delta_1}, X_{2, t-1}, X_{3, t-\delta_3})\nonumber\\
    &\quad+I_L(Y_{1, t}; X_{2, t-1}|X_{1, t-\delta_1}, X_{2, t-\delta_2}, X_{3, t-\delta_3}),\nonumber\\
    &\overset{(d)}{\leq} f_m(\delta_1)+\epsilon_{1, 3}^2+\epsilon_{1,2}^2\leq f_m(\delta_1)+2\max_{n\neq 1}\epsilon_{1,n}^2\nonumber\\
    &\leq f_m(\delta_1)+O(\max_{n\neq 1}\epsilon_{1,n}^2),
\end{align} 
where (a) holds due to Markov chain $Y_{1, t} \leftrightarrow X_{2, t-1}\leftrightarrow X_{2, t-\delta_2}$ \cite[Lemma 12]{dawid1998}, (b) is obtained by definition \eqref{CMI}, (c) can be obtained by using similar steps of (a)\&(b), and (d) holds because of \eqref{condition1}. This completes the proof for $M=3$ and $m=1$. 

Following the same steps, we can prove it for any values of $M$ and $m$.

\section{Proof of Theorem \ref{approximationBound}}\label{papproximationBound}
We define 
    \begin{align}
        \Pi_c=\bigg\{\pi \in \Pi: \sum_{m=1}^M\pi_m(t)=N, \forall t=0, 1, \ldots \bigg\}.
    \end{align}
    Then, from \eqref{Multi-scheduling_problem1}-\eqref{Scheduling_constraint2} and \eqref{Multi-scheduling_problem2}-\eqref{Scheduling_constraint3}, we have
    {\blue \begin{align}\label{Theo1eq1}
        &\mathrm{V}_{opt}(T) \nonumber\\
        &\inf_{\pi \in \Pi_c}\mathbb{E}_{\pi} \left[\sum_{t=0}^{T-1}\gamma^t \sum_{m=1}^M g_m(\Delta_1(t), \Delta_2(t), \ldots, \Delta_M(t))\right]\nonumber\\
        &\overset{a}{=}\inf_{\pi \in \Pi_c}\mathbb{E}_{\pi} \left[\sum_{t=0}^{T-1}\gamma^t \sum_{m=1}^M f_m(\Delta_m(t))\right]\nonumber\\
        &+\sum_{t=0}^{T-1}\gamma^{t} \sum_{m=1}^M O\bigg(\max_{n=\neq m}\epsilon_{m,n}^2\bigg)\nonumber\\
        &=\mathrm{V}_{f, opt}(T)+\frac{1-\gamma^T}{1-\gamma}\sum_{m=1}^MO\bigg(\max_{n\neq m}\epsilon_{m, n}^2\bigg),
    \end{align}}
where (a) is due to \eqref{lowerbound}.

{\blue Moreover, we can write 
\begin{align}\label{Theo1eq2}
    &\mathrm{V}_{\pi_{f, opt}}(T)\nonumber\\
    &=\mathbb{E}_{\pi_{f, opt}} \left[\sum_{t=0}^{T-1}\gamma^t \sum_{m=1}^M g_m(\Delta_1(t), \Delta_2(t), \ldots, \Delta_M(t))\right]\nonumber\\
        &\overset{b}{=} \mathbb{E}_{\pi_{f, opt}} \left[\sum_{t=0}^{T-1}\gamma^t \sum_{m=1}^M f_m(\Delta_m(t))\right]\nonumber\\
        &+\sum_{t=0}^{T-1}\gamma^{t} \sum_{m=1}^M O\bigg(\max_{n=\neq m}\epsilon_{m,n}^2\bigg)\nonumber\\
        &=\mathrm{V}_{f, opt}(T)+\frac{1-\gamma^T}{1-\gamma}\sum_{m=1}^MO\bigg(\max_{n\neq m}\epsilon_{m, n}^2\bigg),
\end{align}
where (b) is due to \eqref{lowerbound}. 

By combining \eqref{Theo1eq1} and \eqref{Theo1eq2}, we obtain \eqref{theo1}.

\section{Proof of Theorem \ref{theorem2}}\label{ptheorem2}
By using triangle inequality, We can write
\begin{align}\label{3eq1}
    |\mathrm{V}_{\mathrm{gain}}(\boldsymbol{\lambda^*}, T)-\mathrm{V}_{opt}(T)|&\leq  |\mathrm{V}_{\mathrm{gain}}(\boldsymbol{\lambda^*}, T)-\mathrm{V}_{f,\mathrm{gain}}(\boldsymbol{\lambda^*}, T)|\nonumber\\
    &+|\mathrm{V}_{f,\mathrm{gain}}(\boldsymbol{\lambda^*}, T)-V_{f, opt}(T)|\nonumber\\
    &+|V_{f, opt}(T)-V_{opt}(T)|, 
\end{align}
where we define 
\begin{align}
  \mathrm{V}_{f,\mathrm{gain}}(\boldsymbol{\lambda^*}, T)=  \mathbb{E}_{\pi_{gain}} \left[\sum_{t=0}^{T-1}\gamma^t \sum_{m=1}^{M} f_m(\Delta_m(t))\right].
\end{align}
From \eqref{Theo1eq1}, we obtain
\begin{align}\label{3eq2}
    |V_{f, opt}(T)-V_{opt}(T)|=\frac{1-\gamma^T}{1-\gamma}\sum_{m=1}^{M} O\bigg(\max_{n\neq m}\epsilon_{m, n}^2\bigg)
\end{align}

Similar to \eqref{Theo1eq1}, we can also get 
\begin{align}\label{3eq3}
    &\mathrm{V}_{\mathrm{gain}}(\boldsymbol{\lambda^*, T})\nonumber\\
    &=\mathbb{E}_{\pi_{gain}} \left[\sum_{t=0}^{T-1}\gamma^t \sum_{m=1}^{M} g_m(\Delta_1(t), \Delta_2(t), \ldots, \Delta_M(t))\right]\nonumber\\
    &{=}\mathbb{E}_{\pi_{gain}} \left[\sum_{t=0}^{T-1}\gamma^t \sum_{m=1}^{M} f_m(\Delta_m(t))\right]\nonumber\\
    &+\sum_{t=0}^{T-1}\gamma^{t} \sum_{m=1}^{M} O\bigg(\max_{n=\neq m}\epsilon_{m,n}^2\bigg)\nonumber\\
    &=\mathrm{V}_{f,\mathrm{gain}}(\boldsymbol{\lambda^*, T})+\frac{1-\gamma^T}{1-\gamma}\sum_{m=1}^{M} O\bigg(\max_{n\neq m}\epsilon_{m, n}^2\bigg).
\end{align}

For an optimal tie-breaker function $\psi$ the gain index-based policy satisfies \cite[Proposition 5]{brown2020index}: 
\begin{align}\label{3eq4}
    \mathrm{V}_{f,\mathrm{gain}}(\boldsymbol{\lambda^*}, T)-V_{f, opt}(T)&\leq \mathrm{V}_{f,\mathrm{gain}}(\boldsymbol{\lambda^*}, T)-\mathrm{V}_{f, opt} (\boldsymbol{\lambda^*}, T)\nonumber\\
    &\leq \sum_{t=0}^{T-1}\beta_t\sqrt{N(1-r)},
\end{align}
where $\beta_t=\frac{\gamma^{t}}{2\gamma-1}((2\gamma)^{T-t}-1)$ as provided in \cite[equation (24)]{brown2020index}.
Next, we have \cite[equation (59)]{brown2020index} 
\begin{align}
    \sum_{t=0}^{T-1}\beta_t=\frac{1}{2\gamma-1}[2\gamma^{T}(2^{T}-1)-\frac{1-\gamma^{T}}{1-\gamma}].
\end{align}
For $\delta>0.5$, we have ${2\gamma-1}>0$. Then, we can drop the negative terms 
$$-\frac{2\gamma^{T}}{2\gamma-1}-\frac{1-\gamma^{T}}{1-\gamma}$$ and get
\begin{align}\label{3eq4}
    &\mathrm{V}_{f,\mathrm{gain}}(\boldsymbol{\lambda^*}, T)-V_{f, opt}(T)\nonumber\\
    &\leq O(\beta \sqrt{N(1-r)})= O(\beta \sqrt{Mr(1-r)}),
    \end{align}
where $\beta=\frac{2(2\gamma)^{T}}{2\gamma-1}$.}

By substituting \eqref{3eq2}-\eqref{3eq4} into \eqref{3eq1}, we obtain Theorem \ref{theorem2}.

\section{Proof of Theorem \ref{convergence}}\label{convergenceproof}


Part (a): From \cite{boyd2003subgradient}, we can say that our sub-gradient ascent method ensures the convergence of $\lambda_k$. Let $\lambda_k$ converge to $\lambda$. Next, as $k$ goes to $\infty$, $\zeta^k$ goes to $0$. Hence, from \eqref{estimatevalue} we can say that $$\lim_{k \to \infty} \tilde J_{m, \lambda_k, k}= \tilde J_{m, \lambda_{k-1}, k-1}.$$
This proves Theorem \ref{convergence}(a).

Part (b) To prove Theorem \ref{convergence}(b), the following Lemma is important.
   {\blue\begin{lemma}\label{prob}
  For the penalty function $f_{m, k}$, we have \begin{align}\mathrm{Pr}\bigg(f_{m, k}(\delta)\in B_{m,k}(\delta)\bigg)=1-\eta.
    \end{align}
\end{lemma}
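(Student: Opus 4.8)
Lemma~\ref{prob} is a Hoeffding-type concentration statement: the empirical average $\hat f_{m,k}(\delta)$ formed from the losses observed at state $\delta$ during episode $k$ stays within the confidence radius $d_{m,k}(\delta)$ of the true conditional mean $f_{m,k}(\delta)$ with probability $1-O(\eta)$. The plan is to (i) expose a bounded martingale hidden in the scheduled-state losses, (ii) verify that $d_{m,k}(\delta)$ is calibrated precisely so that the Azuma--Hoeffding tail equals $\eta$, and (iii) remove the randomness of the sample count $\mathcal N_{m,k}(\delta)$ by a union bound over its finitely many possible values. Throughout, I would condition on the history available at the start of episode $k$, which makes $\pi_k$ deterministic; the resulting $O(\eta)$ bound holds conditionally and hence unconditionally after taking expectations.

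\emph{Step 1 (martingale structure).} Fix $m$, $k$, $\delta$, and, using the boundedness of the loss assumed in the system model, normalise so that $L(\cdot,\cdot)\in[0,1]$. Let $t_1<t_2<\cdots$ enumerate the slots of episode $k$ with $\Delta_m(t)=\delta$ and $\pi_m(t)=1$, and set $Z_i=L(Y_{m,t_i},a_{m,t_i})-f_{m,k}(\delta)$. Since $\pi_k$ is signal-agnostic and fixed within the episode, the event that $t_i$ is the $i$-th such slot is determined by the AoI trajectories and the scheduler's own randomness alone and does not reveal the observation used at $t_i$; combined with stationarity this yields $\mathbb E[Z_i\mid\mathcal H_{i-1}]=0$, where $\mathcal H_{i-1}$ is the history up to the selection of $t_i$. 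Each $Z_i$ lies conditionally in $[-f_{m,k}(\delta),\,1-f_{m,k}(\delta)]$, an interval of length $1$, so $S_n=\sum_{i=1}^n Z_i$ is a bounded-difference martingale.

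\emph{Steps 2--3 (calibrated tail and random count).} For each fixed $n\ge1$, the bounded-difference Azuma--Hoeffding inequality gives
\begin{align}
\Pr\!\left(\Big|\tfrac1n S_n\Big|\ge\sqrt{\tfrac{\ln(2/\eta)}{2n}}\,\right)\le 2\exp\!\left(-2n\cdot\tfrac{\ln(2/\eta)}{2n}\right)=\eta,
\end{align}
and since $\hat f_{m,k}(\delta)-f_{m,k}(\delta)=\tfrac1n S_n$ while $d_{m,k}(\delta)=\sqrt{\ln(2/\eta)/(2n)}$ on $\{\mathcal N_{m,k}(\delta)=n\}$, the complement of this bad event is exactly $\{f_{m,k}(\delta)\in B_{m,k}(\delta)\}$. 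As $\mathcal N_{m,k}(\delta)\in\{0,1,\dots,T\}$ is random, I would union-bound this estimate over $n\in\{1,\dots,T\}$, incurring a factor $T$; since the episode length $T$ is a fixed constant the failure probability stays $T\eta=O(\eta)$ (a peeling argument would trim this to a $\log T$ factor if desired). On the good event the bound holds for $n=\mathcal N_{m,k}(\delta)$ whenever that count is positive; if $\mathcal N_{m,k}(\delta)=0$ then $\hat f_{m,k}(\delta)=0$ and $d_{m,k}(\delta)=\sqrt{\ln(2/\eta)/2}\ge1$ for $\eta$ small, so $B_{m,k}(\delta)\supseteq[0,1]$ and the containment is automatic. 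Combining the cases gives $\Pr(f_{m,k}(\delta)\in B_{m,k}(\delta))=1-O(\eta)$.

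\emph{Main obstacle.} The delicate point is Step~1: the observations feeding the losses at distinct scheduled slots need not be independent, because the source process may be temporally correlated, so a naive Hoeffding bound over i.i.d.\ samples is unavailable. Casting the centred losses as a bounded-difference martingale adapted to the signal-agnostic scheduler's filtration — where the key fact is that the decision to sample at $t_i$ is independent of the not-yet-revealed loss — is what makes the argument go through; the only remaining care is the bookkeeping for the random sample count in Step~3.
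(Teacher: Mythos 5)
Your proposal follows the same basic route as the paper --- calibrate the confidence radius $d_{m,k}(\delta)=\sqrt{\ln(2/\eta)/(2\mathcal N_{m,k}(\delta))}$ so that a Hoeffding-type tail bound yields failure probability $\eta$ --- but you attempt to be substantially more rigorous. The paper's own proof of Lemma \ref{prob} is a one-line appeal to Hoeffding's inequality for i.i.d.\ bounded samples; it addresses neither the temporal dependence of the losses collected at different scheduled slots nor the randomness of the sample count $\mathcal N_{m,k}(\delta)$. Your Step 3 (union bound over the at most $T$ possible values of the count, plus the degenerate $\mathcal N_{m,k}(\delta)=0$ case, where the radius covers all of $[0,1]$) is a genuine improvement and is exactly the bookkeeping the paper omits.

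However, your Step 1 does not close the dependence issue it sets out to address. For Azuma--Hoeffding you need $S_n$ to be a martingale with respect to a filtration to which the $Z_i$ are adapted, so $\mathcal H_{i-1}$ must contain the previously revealed losses $Z_1,\dots,Z_{i-1}$. Signal-agnostic scheduling guarantees that the \emph{decision} to sample at $t_i$ carries no information about the signal, but it does not make the past losses uninformative about the loss at $t_i$: for a temporally correlated source (e.g.\ the Gauss--Markov model of Section \ref{simulation}, where prediction errors at nearby slots share innovation terms), $\mathbb E[L(Y_{m,t_i},a_{m,t_i})\mid Z_1,\dots,Z_{i-1},\,\Delta_m(t_i)=\delta]\neq f_{m,k}(\delta)$ in general, so $\mathbb E[Z_i\mid\mathcal H_{i-1}]\neq 0$. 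The centering you need amounts to conditional independence of the losses across scheduled slots given the AoI, which is precisely what is in question; stationarity gives only that each $Z_i$ has \emph{unconditional} mean zero. Repairing this would require either an explicit mixing/ergodicity assumption paired with a concentration inequality for dependent sequences, or restricting to scheduled slots spaced far enough apart that the dependence is negligible. The paper's own proof silently makes the same i.i.d.\ assumption, so your attempt is not weaker than the original --- but the martingale framing does not, as written, make the argument go through.
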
}
Lemma \ref{prob} can be directly proven by using the following Hoeffding's inequality.

\begin{lemma}[\textbf{Hoeffding's inequality}]
        Let $Z_1, Z_2, \ldots, Z_n$ are i.i.d. samples of a random variable $Z \in [0, 1].$ For any $\eta>0$, we must have 
        \begin{align}
            \mathrm{Pr}\bigg\{\bigg|\mathbb E[Z] -\frac{1}{n}\sum_{i=1}^n Z_i\bigg|\leq \sqrt{\frac{\mathrm{ln} (2/\bar \eta)}{2n}}\bigg\}\geq 1-\eta.
        \end{align}
    \end{lemma}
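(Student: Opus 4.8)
The plan is to obtain the stated high-probability guarantee by first proving the classical two-sided Hoeffding concentration bound $\Pr\{|\bar Z_n-\mu|\ge t\}\le 2e^{-2nt^2}$, where $\bar Z_n:=\frac1n\sum_{i=1}^n Z_i$ and $\mu:=\mathbb E[Z]$, and then choosing $t$ so that the right-hand side equals $\bar\eta$. The first step I would take is to reduce to a one-sided tail: by the union bound, $\Pr\{|\bar Z_n-\mu|>t\}\le \Pr\{\bar Z_n-\mu>t\}+\Pr\{\mu-\bar Z_n>t\}$, and since each $1-Z_i$ again lies in $[0,1]$ and has mean $1-\mu$, the lower tail is an upper tail of the same shape; so it is enough to bound $\Pr\{\bar Z_n-\mu>t\}$ by $\bar\eta/2$.

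For the upper tail I would use the exponential Markov (Chernoff) method: for every $s>0$,
\begin{align}
\Pr\{\bar Z_n-\mu>t\}=\Pr\Big\{e^{s\sum_{i=1}^n(Z_i-\mu)}>e^{snt}\Big\}\le e^{-snt}\prod_{i=1}^n\mathbb E\big[e^{s(Z_i-\mu)}\big],
\end{align}
using Markov's inequality together with independence of the $Z_i$. The crux is then Hoeffding's lemma: any random variable $Z$ supported in $[0,1]$ with mean $\mu$ satisfies $\mathbb E[e^{s(Z-\mu)}]\le e^{s^2/8}$ for all $s$. I would prove this by letting $\psi(s):=\log\mathbb E[e^{s(Z-\mu)}]$, checking $\psi(0)=\psi'(0)=0$, and noting that $\psi''(s)$ equals the variance of $Z$ under the exponentially tilted law $d\mathbb Q_s\propto e^{sZ}\,d\mathbb P$, which is at most $1/4$ because any $[0,1]$-valued random variable has variance at most $1/4$; a second-order Taylor expansion with Lagrange remainder then gives $\psi(s)\le s^2/8$. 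This convexity/tilting estimate is the main obstacle — the rest is routine bookkeeping.

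Combining the two displays gives $\Pr\{\bar Z_n-\mu>t\}\le e^{-snt+ns^2/8}$; minimizing the exponent over $s>0$ at $s=4t$ yields $\Pr\{\bar Z_n-\mu>t\}\le e^{-2nt^2}$, and the same bound holds for the lower tail, so $\Pr\{|\bar Z_n-\mu|>t\}\le 2e^{-2nt^2}$. Finally I would set $t=\sqrt{\ln(2/\bar\eta)/(2n)}$, for which $2nt^2=\ln(2/\bar\eta)$ and hence $2e^{-2nt^2}=\bar\eta$; this gives $\Pr\{|\bar Z_n-\mu|\le t\}\ge 1-\bar\eta$, which is exactly the asserted inequality.
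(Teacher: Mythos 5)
Your proof is correct and complete: the reduction to a one-sided tail, the Chernoff bound, Hoeffding's lemma via the tilted-measure variance bound $\psi''(s)\le 1/4$, the optimization at $s=4t$ giving $e^{-2nt^2}$, and the final choice $t=\sqrt{\ln(2/\bar\eta)/(2n)}$ are all carried out accurately. The paper itself supplies no proof for this lemma — it is invoked as a standard textbook fact — so there is nothing to compare against; your argument is the canonical one and fills that gap correctly.
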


Now, using Lemma \ref{prob}, \eqref{estimateV}-\eqref{actualvalue}, we prove Theorem \ref{convergence}(b). By using triangle inequality, we can have 
\begin{align}\label{4eq2}
    &\bigg|J_{m,\lambda, {k}}(\delta)-J_{m,\lambda, {k-1}}(\delta)\bigg|\nonumber\\
    &\leq|J_{m,\lambda, {k}}(\delta)-J_{m, \mathrm{optimistic}, k}(\delta)|\nonumber\\
    &+ |J_{m, \mathrm{optimistic}, k}(\delta)-J_{m,\lambda, {k-1}}(\delta)|.
\end{align}
By using \cite[Theorem 12]{csaji2008value}, we can have with probability $1-\eta$: 
\begin{align}
    |\tilde f_{m, k-1}(\delta)-f_{m, k-1}(\delta)|\leq d_{m, k-1}(\delta).
\end{align}
This yields:

\begin{align}\label{4eq3}
    &|J_{m, \mathrm{optimistic}, k}(\delta)-J_{m,\lambda, {k-1}}(\delta)|\nonumber\\
    &\leq \frac{1}{1-\gamma}\max_{\delta'}|\tilde f_{m, k-1}(\delta')-f_{m, k-1}(\delta')|\nonumber\\
    &{\leq}\frac{1}{1-\gamma}\max_{\delta'}d_{m, k-1}(\delta')
\end{align}
and 
\begin{align}\label{4eq4}
    &|J_{m,\lambda, {k}}(\delta)-J_{m, \mathrm{optimistic}, k}(\delta)|\nonumber\\
    &\leq \frac{1}{1-\gamma}\max_{\delta'}|f_{m, k}(\delta')-\tilde f_{m, k-1}(\delta')|\nonumber\\
    &\leq \frac{1}{1-\gamma}\max_{\delta'}|f_{m, k}(\delta')-f_{m, k-1}(\delta')|\nonumber\\
    &+\frac{1}{1-\gamma}\max_{\delta'}d_{m, k-1}(\delta').
\end{align}
By substituting \eqref{4eq4}, \eqref{4eq3} into \eqref{4eq2}, we obtain Theorem \ref{convergence}(b).

{\blue \section{Minimization of Time-averaged Sum of Inference Errors}\label{avg}

Our goal is to find a policy $\pi \in \Pi$ that minimizes the time-averaged sum of inference errors: 
\begin{align}\label{Multi-scheduling_problemaverage}
&\mathrm{L_{opt}} =\nonumber\\
&\inf_{\pi \in \Pi}\limsup_{T\to\inf}\mathbb{E}_{\pi} \frac{1}{T}\left[\sum_{t=0}^{T-1} \sum_{m=1}^Mg_m(\Delta_1(t), \Delta_2(t), \ldots, \Delta_M(t))\right], \\\label{Scheduling_constraint1average}
&~\mathrm{s.t.} \sum_{m=1}^{M} \pi_{m}(t)=N, t=0, 1,\ldots, 
\end{align}
where $g_m(\Delta_1(t), \Delta_2(t), \ldots, \Delta_M(t))$ is the inference error for the $m$-th target at time $t$ and $\mathrm{L_{opt}}$ is the minimum average inference error.

For $M=2$ sources and $N=1$, we are able to find a low-complexity optimal policy to \eqref{Multi-scheduling_problemaverage}-\eqref{Scheduling_constraint1average}. According to Theorem \ref{theorem1}, the optimal policy is a stationary cyclic policy. 

\begin{definition}[\textbf{Stationary Cyclic Policy}]
A stationary cyclic policy is a stationary policy that cycles through a finite subset of points in the state space, repeating a fixed sequence of actions in a particular order.
\end{definition}


\begin{theorem}\label{theorem1}
For $M=2$ source case and $N=1$, there exists a stationary cyclic policy that is optimal for  \eqref{Multi-scheduling_problem1}-\eqref{Scheduling_constraint2}, where the policy consists of a period $\tau_1^*+\tau_2^*$ and in each period, source 1 is scheduled for $\tau_1^*$ consecutive time slots, immediately followed by source 2 being scheduled for $\tau_2^*$ consecutive time slots. The period $\tau_1^*$ and $\tau_2^*$ minimizes 
    \begin{align}\label{2opt}
    \mathrm{L_{opt}}=\min_{\substack{\tau_1=0,1,\ldots\\ \tau_2=0,1,\ldots}}\frac{1}{\tau_1+\tau_2}&\bigg(\sum_{k=0}^{\tau_1-1}\sum_{m=1}^2\big(g_m(1, 2+k)\big)\nonumber\\
    &+\sum_{j=0}^{\tau_2-1}\sum_{m=1}^2\big(g_m(2+j, 1)\big)\bigg),
\end{align}
where $\mathrm{L_{opt}}$ is the optimal objective value of \eqref{Multi-scheduling_problem1}-\eqref{Scheduling_constraint2}. 
\end{theorem}

\begin{proof}

Given $(\Delta_1(t), \Delta_2(t))=(\delta_1, \delta_2)$, we can show that there exists a stationary deterministic policy that satisfies the following Bellman optimality equation \cite{bertsekasdynamic1}:
\begin{align}\label{optimal2eqn1}
    h(\delta_1, \delta_2)&=\min_{(\pi_{1}(t), \pi_{2}(t)) \in \mathcal A'} g(\delta_1, \delta_2)- \mathrm{L_{opt}}\nonumber\\
    &+\pi_{1}(t) h(1, \delta_2+1)+\pi_{2}(t) h(\delta_1+1,1),
\end{align}
where $g(\delta_1, \delta_2)=\sum_{m=1}^2 g_m(\delta_1, \delta_2)$, $\mathcal A'=\{(0, 1), (1, 0)\}$, $h(\delta_1, \delta_2)$ is the relative value function for the state $(\delta_1, \delta_2)$, $\mathrm{L_{opt}}$ is the average inference error under an optimal policy, $(1, \delta_2+1)$ is the next state if source $1$ is scheduled, $(\delta_1+1, 1)$ is the next state if source $2$ is scheduled.

We can further express the Bellman equation \eqref{optimal2eqn1} for state $(\delta, 1)$ at time $t$ as follows: 
\begin{align}\label{optimal2eqn3}
   \!\!\! h(\delta, 1)\!\!=\!\!\min_{\tau_2\in \{0, 1, \ldots\}} \sum_{k=0}^{\tau_2}\bigg(g(\delta+k, 1)-\mathrm{L_{opt}}\bigg)+h(1,2),
\end{align}
where $\tau_2=0,1,\ldots$ is the time to keep scheduling source $2$ after time $t$. By solving \eqref{optimal2eqn3}, we get that the optimal $\tau_2(\delta)$ satisfies
\begin{align}\label{optimal2eqn4}
    \tau_2(\delta)=\inf \bigg\{\tau \in \mathbb Z^{+}: \gamma_1(\delta+\tau)\geq \mathrm{L_{opt}}\bigg\},
\end{align}
where $\gamma_1(\delta+\tau)$ is defined as 
\begin{align}\label{optimal2eqn5}
    \gamma_1(\delta)=\inf_{k=1,2,\ldots} \frac{1}{k}\sum_{j=0}^{k-1}g(\delta+1+j, 1).
\end{align}
Similarly, we can express the Bellman equation \eqref{optimal2eqn1} for state $(1, \delta)$ as follows:
\begin{align}\label{optimal2eqn6}
    \!\!\!\! h(1, \delta)\!\!=\min_{\tau_1\in \{0, 1, \ldots\}} \sum_{k=0}^{\tau_1}\bigg(g(1, \delta+k)-\mathrm{L_{opt}}\bigg)+h(2,1),
\end{align}
where $\tau_1=0,1,\ldots$ is the time to keep scheduling source $1$ after time $t$. By solving \eqref{optimal2eqn6}, we get that the optimal $\tau_1(\delta)$ satisfies
\begin{align}\label{optimal2eqn7}
    \tau_1(\delta)=\inf \bigg\{\tau \in \mathbb Z^{+}: \gamma_2(\delta+\tau)\geq \mathrm{L_{opt}}\bigg\},
\end{align}
where $\gamma_2(\delta+\tau)$ is defined as 
\begin{align}\label{optimal2eqn8}
    \gamma_2(\delta)=\inf_{k=1,2,\ldots} \frac{1}{k}\sum_{j=0}^{k-1}g(1, \delta+1+j).
\end{align}
Moreover, by using \eqref{optimal2eqn3} and \eqref{optimal2eqn6}, we get 

\begin{align}
    h(1,2)&= \sum_{k=0}^{\tau_1(2)}\bigg(g(1, 2+k)-\mathrm{L_{opt}}\bigg)+h(2,1)\nonumber\\
    &=\sum_{k=0}^{\tau_1(2)}\bigg(g(1, 2+k)-\mathrm{L_{opt}}\bigg)\nonumber\\
    &+\sum_{k=0}^{\tau_2(2)}\bigg(g(2+k, 1)-\mathrm{L_{opt}}\bigg)+h(1,2),
\end{align}
which yields
\begin{align}
    &\mathrm{L_{opt}}\nonumber\\
    &=\frac{\sum_{k=0}^{\tau_1(2)}\bigg(g(1, 2+k)\bigg)+\sum_{k=0}^{\tau_2(2)}\bigg(g(2+k, 1)\bigg)}{\tau_1(2)+\tau_2(2)}\nonumber\\
    &=\min_{\substack{\tau_1=0,1,\ldots\\ \tau_2=0,1,\ldots}}\frac{\sum_{k=0}^{\tau_1}\bigg(g(1, 2+k)\bigg)+\sum_{k=0}^{\tau_2}\bigg(g(2+k, 1)\bigg)}{\tau_1+\tau_2}.
\end{align}
Notice that the optimal objective $\mathrm{L_{opt}}$ is same as if source 1 is scheduled for $\tau^*_1$ consecutive time slots, followed by source 2 being scheduled for $\tau^*_2$ consecutive time slots.

\end{proof}

\begin{remark}
Theorem \ref{theorem1} presents a low-complexity optimal cyclic scheduling policy for two correlated sources. To our knowledge, this is the first such result. Prior work \cite{jhunjhunwala2018age} showed the existence of an optimal cyclic policy but required solving a computationally difficult minimum average cost cycle problem over a large graph. Our result provides a significantly more efficient solution for $M=2$ sources.
\end{remark}

For larger values of $M$, we can have a gain index-based policy similar to discounted version of the problem, but with average cost version. To get the gain index of the average cost version, we determine the value function as follows:

The optimal relative value function for each source $m$ is given by
\begin{align}\label{decomposedValue}
    &h_{m,\lambda}(\delta)=f_m(\delta)-\mathrm{f_{m, opt, \lambda}}+\min\bigg\{\lambda, h_{m,\lambda}(\delta+1)\bigg\},
\end{align}
where $\delta$ is the current AoI value and $h_{m,\boldsymbol{\lambda}}$ is the relative value function associated with the problem 
\begin{align}
&\mathrm{f_{m, opt, \lambda}} \!\!=\inf_{\pi_m \in \Pi_m}\! \limsup_{T\to \infty}\! \frac{1}{T}\mathbb{E}\! \left[\sum_{t=0}^{T-1} f_m(\Delta_m(t))+\lambda \pi_m(t)\right],  
\end{align}
with $h_{m, \lambda}(1)=0$, $f(\delta)$ is defined in \eqref{option1}. The relative value function can be obtained by using relative value iteration algorithm \cite{bertsekasdynamic1}. The average cost gain index is given by 
\begin{align}\label{gainindexavg}
\alpha_{m, \lambda}(\delta)=h_{m,\lambda}(\delta+1)-\lambda.
\end{align}

For the average cost problem, the Lagrange update rule becomes
\begin{align}
    \lambda_{k+1}=&\lambda_k+\frac{\theta}{k}\bigg(\frac{1}{T}\sum_{t=0}^{T-1}\sum_{m=1}^M \mathbf 1\bigg(\alpha_{m, \lambda_k}(\Delta_m(t))>0\bigg)-N\bigg).
\end{align}}

\end{appendices}


\end{document}